\def\maxwidth{ %
  \ifdim\Gin@nat@width>\linewidth
    \linewidth
  \else
    \Gin@nat@width
  \fi
}
\definecolor{fgcolor}{rgb}{0.345, 0.345, 0.345}
\definecolor{shadecolor}{rgb}{.97, .97, .97}
\definecolor{messagecolor}{rgb}{0, 0, 0}
\definecolor{warningcolor}{rgb}{1, 0, 1}
\definecolor{errorcolor}{rgb}{1, 0, 0}
\newenvironment{knitrout}{}{} % an empty environment to be redefined in TeX
\let\originalparagraph\paragraph
\renewcommand{\paragraph}[1]{\originalparagraph{#1.}}
\theoremstyle{plain} % Theorems
\newtheorem{lemma}{Lemma}[section]
\newtheorem{proposition}{Proposition}[section]
\newtheorem{theorem}{Theorem}[section]
\newtheorem{corollary}{Corollary}[section]
\theoremstyle{definition} % Definitions
\newtheorem{definition}{Definition}[section]
\theoremstyle{remark} % Remarks
\newtheorem{remark}{Remark}[section]
\newtheorem{example}{Example}[section]
\numberwithin{equation}{section}
\newcommand{\field}[1]{\mathbb{#1}}
\newcommand{\R}{\field{R}}
\newcommand{\N}{\field{N}} 
\newcommand{\Lr}{\mathcal{L}} 
\newcommand{\Kr}{\mathcal{K}}
\newcommand{\Tr}{\mathcal{T}}
\newcommand{\Sr}{\mathcal{S}}
\newcommand{\Cr}{\mathcal{C}}
\newcommand{\Mr}{\mathcal{M}}
\newcommand{\Dr}{\mathcal{D}}
\newcommand{\Normal}{\mathcal{N}}
\newcommand{\Ncal}{\mathcal{N}}
\newcommand{\eqdefrev}{=\mathrel{\mathop:}}
\newcommand{\bij}{\underset{\sim}{\to}}
\newcommand{\sachant}[2]{\left.#1\mathrel{}\middle|\mathrel{}#2\right.} % sachant que
\newcommand{\Esp}{\field{E}}
\newcommand{\Espe}[1]{\field{E}\left[ #1 \right]}
\newcommand{\vect}[1]{\mathbf{#1}}
\newcommand{\vvect}[1]{\boldsymbol{#1}}
\newcommand{\1}{\vect{1}}
\newcommand{\matr}[1]{\mathbf{#1}}
\DeclareMathOperator{\Span}{Span}
\DeclareMathOperator{\Diag}{Diag}
\DeclareMathOperator{\Supp}{Supp}
\DeclareMathOperator{\Proj}{Proj}
\DeclareMathOperator*{\argmax}{argmax}
\DeclareMathOperator*{\argmin}{argmin}
\newcommand{\intervalle}[4]{\mathopen{#1}#2\mathclose{}\mathpunct{},#3\mathclose{#4}}
\newcommand{\intervalleff}[2]{\intervalle{[}{#1}{#2}{]}}
\newcommand{\intervalleentier}[2]{\intervalle\llbracket{#1}{#2}\rrbracket}
\newcommand{\floor}[1]{\mathopen{\lfloor}#1\mathclose{\rfloor}}
\newcommand{\norm}[1]{\left\lVert#1\right\rVert}
\newcommand{\mahanorm}[2]{\left\lVert#1\right\rVert_{\matr{#2}^{-1}}}
\newcommand{\card}[1]{\left\lvert#1\right\rvert}
\DeclareMathOperator{\pa}{pa}
\DeclareMathOperator{\Par}{Anc}
\DeclareMathOperator{\pen}{pen}
\DeclareMathOperator{\Crit}{Crit}
\DeclareMathOperator{\Dkhi}{Dkhi}
\DeclareMathOperator{\EDkhi}{EDkhi}
\newcommand{\Ibb}{\field{I}}
\newcommand{\OU}{Ornstein-Uhlenbeck\xspace}
\newcommand\coolrightbrace[2]{%
\left.\vphantom{\begin{matrix} #1 \end{matrix}}\right\}#2}
\newcommand{\PB}[1]{{#1}}
\newcommand{\NR}[1]{\textcolor{red}{#1}}
\newcommand{\printR}[1]{{\sf #1}}
\newcommand{\citeR}[1]{\printR{#1} \citep{#1}}
\newcommand{\comment}[1]{}
\begin{document}
%\SweaveOpts{concordance=TRUE}

%--------------------------------------------------------------------------

\begin{center}
{\Large
  {\sc  Detection of adaptive shifts on phylogenies using shifted stochastic processes on a tree}
}
\bigskip

 Paul Bastide$^{1,2}$, Mahendra Mariadassou $^{2}$ \& St\'ephane Robin $^{1}$
\bigskip

{\it
$^{1}$ UMR MIA-Paris, AgroParisTech, INRA, Universit\'e Paris-Saclay, 75005, Paris, France

% $^{1}$ AgroParisTech, UMR518 MIA-Paris, F-75231 Paris Cedex 05, France.

% $^{2}$ INRA, UMR518 MIA-Paris, F-75231 Paris Cedex 05, France.

$^{2}$ MaIAGE, INRA, Universit\'e Paris-Saclay, 78352 Jouy-en-Josas, France

paul.bastide@agroparistech.fr, mahendra.mariadassou@jouy.inra.fr, stephane.robin@agroparistech.fr
}
\end{center}
\bigskip

%--------------------------------------------------------------------------

{\bf Summary.} Comparative and evolutive ecologists are interested in the distribution of quantitative traits among related species. The classical framework for these distributions consists of a random process running along the branches of a phylogenetic tree relating the species. We consider shifts in the process parameters, which reveal fast adaptation to changes of ecological niches. We show that models with shifts are not identifiable in general. Constraining the models to be parsimonious in the number of shifts partially alleviates the problem but several evolutionary scenarios can still provide the same joint distribution for the extant species. We provide a recursive algorithm to enumerate all the equivalent scenarios and to count the number of effectively different scenarios. We introduce an incomplete-data framework and develop a maximum likelihood estimation procedure based on the EM algorithm. Finally, we propose a model selection procedure, based on the cardinal of effective scenarios, to estimate the number of shifts and for which we prove an oracle inequality.

\smallskip

{\bf Keywords.} Random process on tree, Ornstein-Uhlenbeck process, Change-point detection, Adaptive shifts, Phylogeny, Model selection

%Comparative Ecology, Ornstein-Uhlenbeck, Expectation-Maximization, Phylogeny, Model selection, Change-point detection

%--------------------------------------------------------------------------

%%%%%%%%%%%%%%%%%%%%%%%%%%%%%%%%%%%%%%%%%%%%%%%%%%%%%%%%%%%%%%%%%%%%%%%
%%% Definition of Knitr
%%%%%%%%%%%%%%%%%%%%%%%%%%%%%%%%%%%%%%%%%%%%%%%%%%%%%%%%%%%%%%%%%%%%%%%

%%%%%%%%%%%%%%%%%%%%%%%%%%%%%%%%%%%%%%%%%%%%%%%%%%%%%%%%%%%%%%%%%%%%%%%

% ***********************************************************
% ******************* END HEADER ****************************
% ***********************************************************

%%%%%%%%%%%%%%%%%%%%%%%%%%%%%%%%%%%%%%%%%%%%%%%%%%%%%%%%%%%%%%%%%%%%
\section{Introduction}
  %%%%%%%%%%%%%%%%%%%%%%%%%%%%%%%%%%%%%
  \subsection{Motivations: Environmental Shifts}
  
% \paragraph{Comparative and Evolutionary Biology} 
% \paragraph{Environmental Shifts} 
% When observing a set of species, on of the main goals of comparative and evolutionary biology is to explain the pattern of the traits of interest

  An important goal of comparative and evolutionary biology is to
  decipher the past evolutionary mechanisms that shaped present day
  diversity\PB{, and more specifically to detect the dramatic changes
    that occurred in the past \citep[see for instance][]{losos1990,
      mahler2013, davis2007, jaffe2011}}.  It is well established that
  related organisms do not evolve independently
  \citep{felsenstein1985}: their shared evolutionary history is well
  represented by a phylogenetic tree. \PB{In order to explain the
    pattern of traits measured on a set of related species, one needs
    to take these correlations into account. Indeed, a given species
    will be more likely to have a similar trait value to her
    \enquote{sister} (a closely related species) than to her
    \enquote{cousin} (a distantly related species), just because of
    the structure of the tree.  On top of that structure, when
    considering a \emph{functional} trait (i.e.\ a trait
    directly linked to the fitness of its bearer), such as shell size
    for turtles \citep{jaffe2011}, one needs to take into account the
    effect of the species environment on its traits. Indeed, a change
    in the environment for a subset of species, like a move to the
    Gal\`apagos Islands for turtles, will affect the observed trait
    distribution, here with a shift towards giant shell sizes compared
    to mainland turtles. The observed present-day trait distribution
    hence contains the footprint of adaptive events, and should allow
    us to detect unobserved past events, like the migration of one
    ancestral species to a new environment.} Our goal here is to
  devise a statistical method \PB{based on a rigorous maximum
    likelihood framework} to automatically detect the past
  environmental shifts that shaped the present day trait distribution.

% An important goal of comparative and evolutionary biology is to decipher the past evolutionary mechanisms that shaped present day diversity. It is well established that related organisms do not evolve independently \citep{felsenstein1985}: their shared evolutionary history is well represented by a phylogenetic tree. By virtue of their long shared evolutionary history, closely related species are more similar than distantly related ones. Evolutionary history is well reflected not only in molecular sequences but also in some adaptive traits, such as shell size~\citep{jaffe2011}, that impact and track the fitness of their bearer. Adaptive traits are however subject to environmental selection to a greater extent than molecular sequences and therefore capture not only elapsed time but also environmental changes. Specifically, if an ancestral species experiences an abrupt environmental shift that changes the optimal value of its trait, that species trait will evolve under selective pressure toward the new optimum. The value will then be passed down to its descendant, potentially leading to unexpectedly large differences between sub-families of species. The distribution of trait values across extant species hence contains the footprint of adaptive events and should in principle allow us to detect unobserved past events. Our goal here is to devise a statistical framework to automatically detect the past environmental shifts that shaped the present day trait distribution.

    % \paragraph{Environmental Shifts} {}
    
    % \paragraph{Goal of our Method} {}

  %%%%%%%%%%%%%%%%%%%%%%%%%%%%%%%%%%%%%
  \subsection{Stochastic Process on a tree}
  
  We model the evolution of a quantitative adaptive trait using the
  framework of stochastic processes on a tree. Specifically, given a
  rooted phylogenetic tree, we assume that the trait evolves according
  to a given stochastic process on each branch of the tree. At each
  speciation event, or equivalently node of the tree, one independent
  copy with the same initial conditions and parameters is created for
  each daughter species, or outgoing branches.
  
  \paragraph{Tree Structure} This model is our null model: it accounts
  for the {tree-induced} {distribution} of trait values {in the
    absence of shifts}. Depending on the phenomenon studied, several
  stochastic processes can be used to capture the dynamic of the trait
  evolution. In the following, we will use the Brownian Motion (BM)
  and the \OU (OU) processes.

    \begin{figure}[!ht]
    \begin{tabular}{lr}
    \subfloat[A phylogenetic tree.]{
\begin{knitrout}
\definecolor{shadecolor}{rgb}{0.969, 0.969, 0.969}\color{fgcolor}
\includegraphics[width=0.42\textwidth]{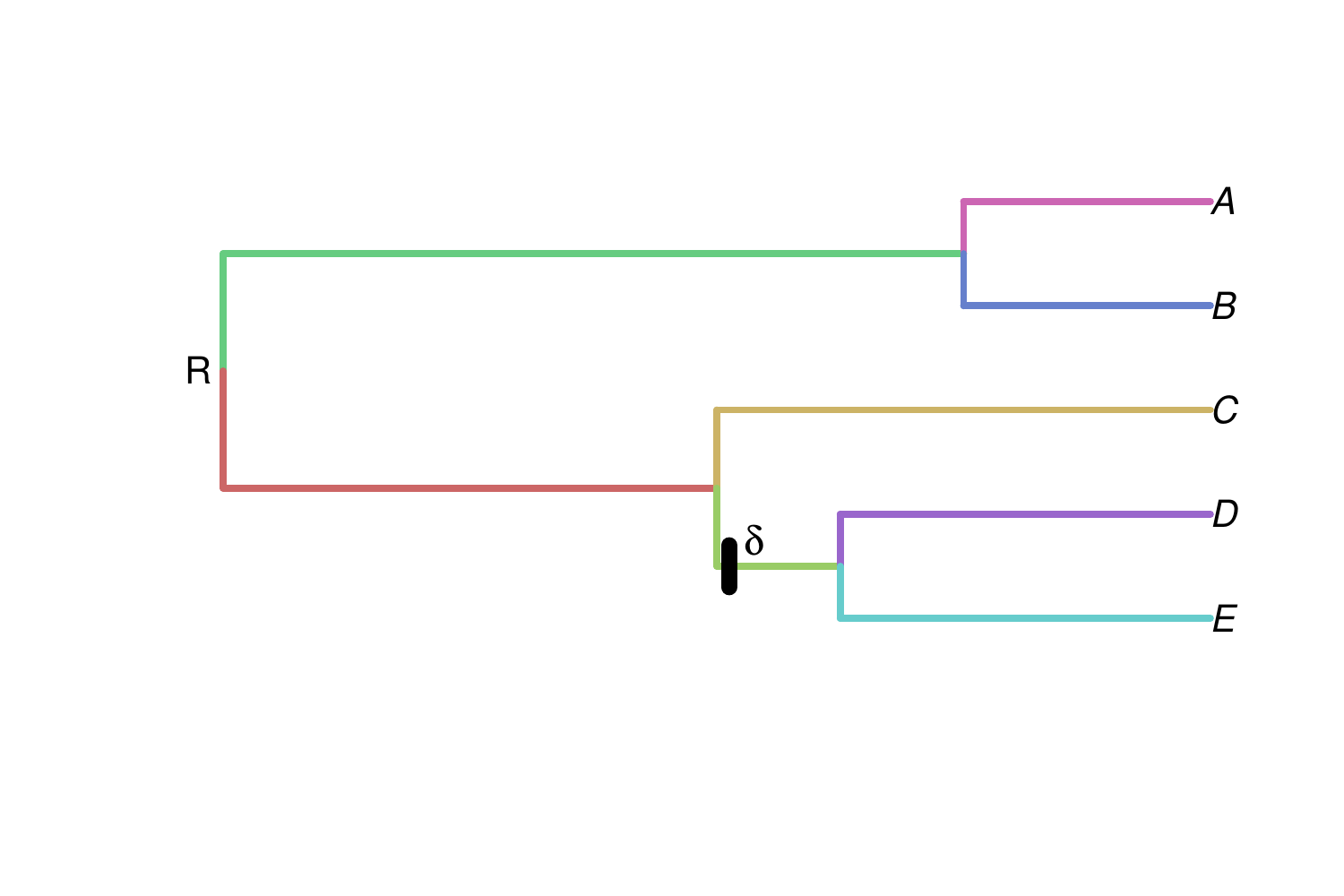} 

\end{knitrout}
} &
    \subfloat[Trait value evolution.]{
\begin{knitrout}
\definecolor{shadecolor}{rgb}{0.969, 0.969, 0.969}\color{fgcolor}
\includegraphics[width=0.42\textwidth]{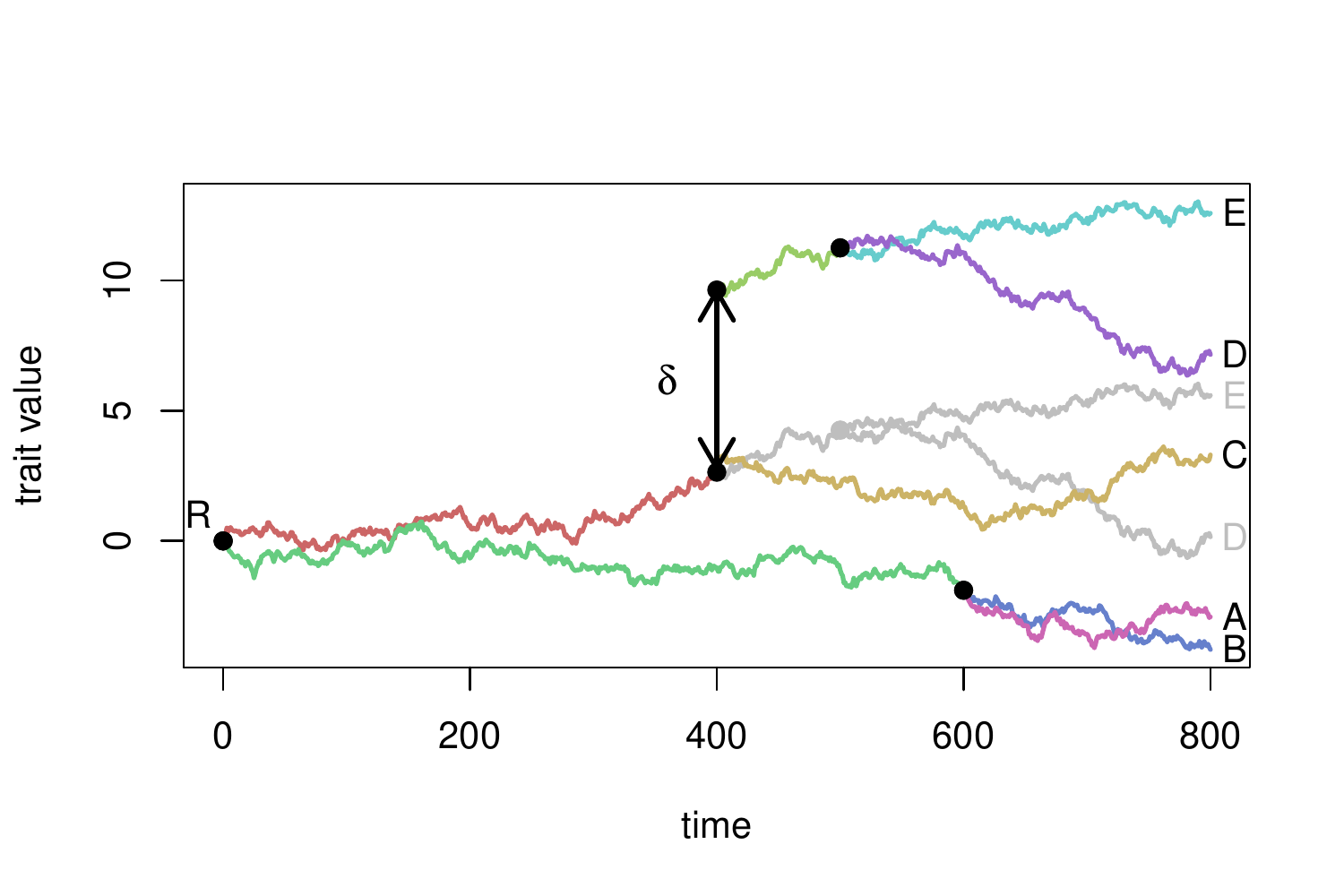} 

\end{knitrout}
    }\\
    \end{tabular}
    \caption{Trait evolution under a Brownian Motion. The ancestral value of the trait is $0$, and the observed values (time $800$) range from $-4$ to $11$ for extant species. One shift occurs on the parent branch of (D,E), changing the trajectory of their ancestral trait value from the grey one to the colored one. The shift increases the observed dispersion.}\label{fig:stochastic_process}
    \end{figure}
      
    \paragraph{Brownian Motion} Since the seminal article of \citet{felsenstein1985}, the BM has been used as a neutral model of {trait} evolution. If $(B_t; t\geq 0)$ is the Brownian motion, a character $(W_t; t\geq 0)$ evolves on a lineage according the the stochastic differential equation: $dW_t = \sigma dB_t$, $\sigma^2$ being a variance parameter. If $\mu$ is the ancestral {value} at the root of the tree ($t = 0$), {then} $W_t \sim \Normal(\mu, \sigma^2 t)$. The variance {$\sigma^2 t$} of the trait is proportional to the time of evolution and the covariance {$\sigma^2 t_{ij}$} between two species $i$ and $j$ is proportional to their time of shared evolution.
    
    \paragraph{\OU Process} An unbounded variance is quite unrealistic for adaptive traits \citep{butler2004}. For that reason, the OU process, that models stabilizing selection around an adaptive optimum \citep{hansen1997} is usually preferred to the BM. It is defined by the stochastic differential equation
    \(
      dW_t = -\alpha(W_t-\beta)dt + \sigma dB_t
    \), 
and has stationary distribution $\Normal(\beta, {\sigma^2}/{2\alpha})$. In this equation, $W_t$ is the \emph{secondary optimum} of a species, a trade-off between all selective constraints
% , ecological or else
\PB{--~e.g.\ ecological~--} 
on the trait and can be approached by the population mean of that species. The term $-\alpha(W_t-\beta)dt$ of the equation represents the effects of stabilizing selection towards a \emph{primary optimum} $\beta$, that depends only on the ecological niche of the species. The selection strength is controlled by the call-back parameter $\alpha$. For interpretation purpose, we will use the \emph{phylogenetic half-life} $t_{1/2} = \ln(2)/\alpha$, defined as the time needed for the expected trait value to move half the distance from the ancestral state to the primary optimum \citep{hansen1997}. The term $\sigma dB_t$ represents the random effects of uncontrolled factors, ranging from genetic drift to environmental fluctuations. We refer to \citet{hansen1997, hansen2008} for further discussion and deeper biological interpretations of the hypothesis underlying this model of evolution.
    The aim of our work is to detect environmental shifts.
    
    \paragraph{Environmental Shifts} In addition to the previous mechanisms, we assume that abrupt environmental changes affected the ecological niche of some ancestral species. We model these changes as instantaneous shifts in the parameters of the stochastic process. Shifted parameters are inherited along time and thus naturally create clusters of extant species that share the same parameters trajectories.
    In the BM process, shifts affect the mean value of the trait and are thus instantaneously passed on to the trait itself (see Figure~\ref{fig:stochastic_process}) whereas in the OU process, shifts affect the primary optimum $\beta$. In this case, the trait converges to its new stationary value with an exponential decay of half-life $t_{1/2}$ inducing a lag that makes recent shifts harder to detect \citep{hansen2012}. In the remainder, we assume that all other parameters ($\sigma^2$ for the BM and $\sigma^2, \alpha$ for the OU) are fixed and constant \citep[but see][for partial relaxations of this hypothesis]{beaulieu2012, rabosky2014}.
  
  %%%%%%%%%%%%%%%%%%%%%%%%%%%%%%%%%%%%%
  \subsection{Scope of this article}
  
  \paragraph{State of the Art} Phylogenetics Comparative Methods (PCM)
  are an active field that has seen many fruitful developments in the
  last few years \citep[see][for an extensive review]{pennell2013}.
  Several methods have been specifically developed to study adaptive
  evolution, starting with the work of \citet{butler2004}.
  \citet{butler2004} only consider shifts in the optimal value
  $\beta$ whereas \citet{beaulieu2012} also allow for shifts in the
  selection strength $\alpha$ and the variance $\sigma^2$. Both have
  in common that shift locations are assumed to be known. Several
  extensions of the model without or with known shifts have also been
  proposed: \citet{hansen2008} extended the original work of
  \citet{hansen1997} on OU processes to a two-tiered model where
  $\beta(t)$ is itself a stochastic process (either BM or OU).
  \citet{bartoszek2011} extended it further to multivariate traits
  whereas \citet{hansen2012} introduced errors in the
  observations. Expanding upon the BM, \citep{landis2013} replaced
  fixed shifts, known or unknown, by random jump processes using Levy
  processes. \PB{Non-Gaussian models of trait evolution were also
    recently considered by \citet{hiscott2015}, who adapted
    Felsenstein's pruning algorithm for the likelihood computation of
    these models, using efficient integration techniques.} Finally,
  \citet{hoane2013a} derived consistency results for estimation of the
  parameters of an OU on a tree and \citet{bartoszek2012, sagitov2012}
  computed confidence intervals of the same parameters by assuming
  \PB{an unknown random tree} topology and averaging over it.
% Note that these last works use the same model but do not explicitly aim at locating the shifts along the evolutionary tree.

  The first steps toward automatic detection of shifts\PB{, which is
    the problem of interest in this paper,} have been done in a
  Bayesian framework, for both the BM \citep{eastman2013} and the OU
  \citep{uyeda2014}.  Using RJ-MCMC, they provide the user with the
  posterior distribution of the number and location of shifts on the
  tree.  Convergence is however severely hampered by the size of the
  search space. The growing use of PCM in fields where large trees
  are the norm makes maximum likelihood
  %(ML)
  based point estimates of
  the shift locations more practical. A stepwise selection procedure
  for the shifts has been proposed in \citet{SURFACE}. The procedure
  adds shifts one at the time and is therefore \PB{rather} efficient
  but the selection criterion is heuristic and has no theoretical
  grounding for that problem, where observations are correlated
  through the tree structure. 
  These limitations have been pointed
    out in \citet{hoane2014}. In this article, the authors describe
    several identifiability problems that arise when trying to infer
    the position of the shifts on a tree, and propose a different
    stepwise algorithm based on a more stringent selection criterion,
    heuristically inspired by segmentation algorithms.

To rigorously tackle this issue, we introduced a framework where a univariate trait evolves according to an OU process with stationary root state (S) on an Ultrametric tree (U). Furthermore, as the exact position of a shift on a branch is not identifiable for an ultrametric tree, we assume that shifts are concomitant to speciation events and only occur at Nodes (N) of the tree. We refer to this model as OUsun hereafter.

    %OU: \citet{butler2004, beaulieu2012, uyeda2014}; BM: \citet{eastman2013, rabosky2014}; Complete review: \citet{pennell2013}; Levis Process: \citet{landis2013}; Confidence Intervals: \citet{bartoszek2012, sagitov2012}; Models: \citet{hansen1997, hansen2008, bartoszek2011}; Data Uncertainty: \citet{hansen2012}
    
    \paragraph{Our contribution} \PB{In this work, we make several major contributions to the problem at hand. First, we derive a statistical method to find a maximum likelihood estimate of the parameters of the model. When the number of shifts is fixed, we work out an Expectation Maximization (EM) algorithm that takes advantage of the tree structure of the data to efficiently maximize the likelihood. 
    Second, we show that, given the model used and the kind of data available, some evolutionary scenarios remain indistinguishable. Formally, we exhibit some identifiability problems in the location of the shifts, even when their number is fixed, and subsequently give a precise characterization of the space of models that can be inferred from the data on extant species.
    Third, we provide a rigorous model selection criterion to choose the number of shifts needed to best explain the data. Thanks to our knowledge of the structure of the spaces of models, acquired through our identifiability study, we are able to mathematically derive a penalization term, together with an oracle inequality on the estimator found.
    Fourth and finally, we implement the method on the statistical software \citeR{R}, and show that it correctly recovers the structure of the model on simulated datasets. When applied to a biological example, it gives results that are easily interpretable, and coherent with previously developed methods. All the code used in this article is publicly available on GitHub (\url{https://github.com/pbastide/PhylogeneticEM}).}%, and should be shortly released as an R-package.}
    
   % In this work, we \PB{describe a rigorous} ML inference of the parameters of the model, including the number and location of shifts. Based on the statistical description of the model for a fixed number of shifts, we exhibit some identifiability problems for the location of the shifts and subsequently give a precise characterization of the space of models that can be inferred from the data on extant species. This allows us to \PB{mathematically derive} a model selection procedure for the number of shifts that takes into account the specificities of the tree structure, \PB{and for which we can exhibit an oracle inequality}. \PB{We show how this estimator can be computed in practice thanks to a tailored EM algorithm, and demonstrate the performance of our method on a set of simulations and on a biological example.} \PB{All the code used in this article is publicly available on GitHub (\url{https://github.com/pbastide/PhylogeneticEM}), and should be shortly released as an R-package.}
  
  \paragraph{Outline} In Section~\ref{sec:statistical_modeling}, we present the model, using two different mathematical point of views, that are both useful in different aspects of the inference. In Section~\ref{sec:identifiability}, we tackle the identifiability problems associated with this model, and describe efficient algorithms to enumerate, first, all equivalent models within a class, and, second, the number of truly different models for a given number of shifts. These two sections form the foundation of Section~\ref{sec:inference}, in which we describe our fully integrated maximum likelihood inference procedure. Finally, in Sections~\ref{sec:simulations} and \ref{sec:chelonia}, we conduct some numerical experiments on simulated and biological datasets.
  
%%%%%%%%%%%%%%%%%%%%%%%%%%%%%%%%%%%%%%%%%%%%%%%%%%%%%%%%%%%%%%%%%%%%%
\section{Statistical Modeling} \label{sec:statistical_modeling}
  
  %%%%%%%%%%%%%%%%%%%%%%%%%%%%%%%%%%%%%
  \subsection{Probabilistic Model} \label{subsec:probabilistic_model}
  
    \begin{figure}[htb]
      \begin{flushright}
  \def\svgwidth{0.8\textwidth}
  \import{../Figures/}{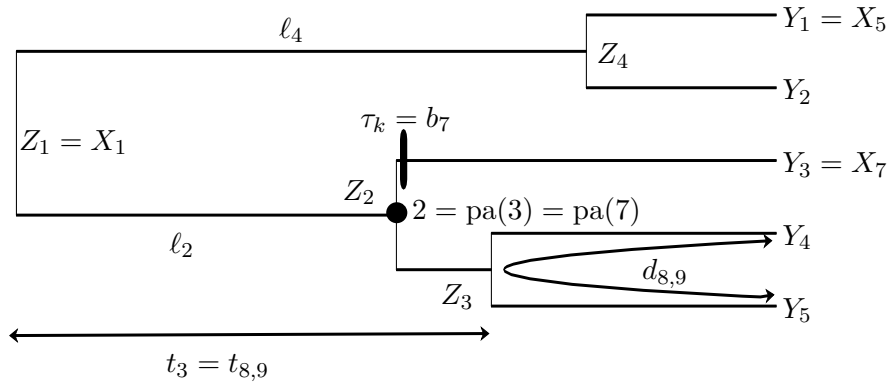}
      \end{flushright}
      \caption{A rooted and time calibrated phylogenetic tree with the notations used to parametrize the tree ($l, t, d, b$) and the observed ($Y$) and non-observed ($Z$) variables.}
      \label{fig:tree_notations}
    \end{figure}
  
    \paragraph{Tree Parametrization} As shown in
    Figure~\ref{fig:tree_notations}, we consider a rooted tree $\Tr$
    with $n$ tips and $m$ internal nodes ($m = n - 1$ for binary
      trees). The internal nodes are numbered from $1$ (the root) to
    $m$, and the tips from $m + 1$ to $m + n$. Let $i$ be an integer,
    $i \in \intervalleentier{2}{m + n}$. Then $\pa(i)$ is the unique
    parent of node $i$. The branch leading to $i$ from $\pa(i)$ is noted
    $b_i$ and has length $\ell_i = t_i - t_{\pa(i)}$ where $t_i$ is
    the time elapsed between the root and node $i$. By convention, we
    set $t_1 = 0$ and $t_{\pa(1)} = - \infty$ for the root. The last
      convention ensures that the trait follows the stationary
      distribution (if any) of the process at the root. We denote
    $\Par(i) = \{\pa^r(i): r \geq 0\}$ the \PB{set} composed of node
    $i$ and of all its ancestors up to the root. For a couple of
    integers $(i, j)$, $(i, j) \in \intervalleentier{1}{m + n}^2$,
    nodes $i$ and $j$ are at phylogenetic distance $d_{ij}$ and the
    time of their most recent common ancestor (mrca) is $t_{ij}$. We
      consider ultrametric trees, for which $t_{m+1} = \dotsb =
      t_{m+n} \eqdefrev h$ and note $h$ the tree height. In the
    following, the tree is fixed and assumed to be known.\par
    
    \paragraph{Trait Values}
    We denote by $\vect{X}$ the vector of size $m + n$ of the trait values at the nodes of the tree. We split this vector between non-observed values $\vect{Z}$ (size $m$) at the internal nodes, and observed values $\vect{Y}$ (size $n$) at the tips, so that $\vect{X}^T = (\vect{Z}^T, \vect{Y}^T)$. According to our model of trait evolution, the random variable $X_i$, $i \in \intervalleentier{1}{m + n}$, is the result of a stochastic process stopped at time $t_i$.
    In the following, we assume that \PB{the inference in the BM case is done conditionally to a fixed root value $X_1 = \mu$. In the OUsun case, we assume that the root trait value is randomly drawn from the stationary distribution: $X_1 \sim \Normal(\mu = \beta_1, \gamma^2 = \frac{\sigma^2}{2\alpha})$, where $\beta_1$ is the ancestral optimal value.}
    
    %the root node trait value is random: $X_1 \sim \Normal(\mu, \gamma^2)$.
    
    \paragraph{Shifts}
    We assume that $K$ shifts occur on the tree, $K\in\N$. The $k^{\text{th}}$ shift, $k\in\intervalleentier{1}{K}$, occurs at the beginning of branch $\tau_k$, $\tau_k \in \{b_i, i \in \intervalleentier{2}{m+n}\}$, and has intensity $\delta_k$, $\delta_k \in \R$. The interpretation of this intensity depends on the process. In the following, we use the vector $\vect{\Delta}$ of shifts on the branches, of size $m+n$, with $K+1$ non-zero entries, and defined as follows (see example~\ref{ex:tree_matrix}):
%        \[
%        \begin{cases}
%         \Delta_1 = \mu \PB{~( = \beta_1 \text{ for an OUsun})} & \\
%         \Delta_i = \begin{cases}
%   	    \delta_k & \text{ if } \tau_k = b_i\\
% 		    0 & \text{ otherwise.}
% 		    \end{cases}
% 		    & \forall i \in \intervalleentier{2}{n+m}
% 	\end{cases}
%   \]
       \[
        \Delta_1 = \mu \PB{~( = \beta_1 \text{ for an OUsun})} \quad \text{ and } \quad
		    \forall i \in \intervalleentier{2}{n+m}, \;
        \Delta_i = \begin{cases}
  	    \delta_k & \text{ if } \tau_k = b_i\\
		    0 & \text{ otherwise.}
		    \end{cases}
  \]
%   \PB{Where $\mu$ represents either the trait value or the trait mean at the root of the tree, depending on whether the root is considered fixed or not. In the following, we will assume that the inference of the BM is done conditionally to the root. Our stationary assumption on the OU (OUsun) means that the root is randomly drawn from the stationary distribution, and hence that $\mu$ is equal to the ancestral optimal value $\beta_1$: $X_1 \sim \Normal(\mu = \beta_1, \gamma^2 = \frac{\sigma^2}{2\alpha})$.
%       Note that this root mean or value $\mu$ is formalized here as an initial, always present shift on the fictive root branch}.
    Note that no proper shift occurs on the root branch, but that \PB{the root trait value or mean, $\mu$,} is formalized as \PB{an initial fictive shift on this fictive branch}.
    
    \paragraph{Parameters} The parameters needed to describe an OUsun (respectively, a BM) are
   % $\vvect{\theta} = (\PB{\mu=}\beta_1, \gamma, \alpha, \vvect{\delta}, \vvect{\vect{\tau}})$ (resp.\ $\vvect{\theta} = (\mu, \gamma, \sigma, \vvect{\delta}, \vvect{\vect{\tau}})$), or, equivalently,
    $\vvect{\theta} = (\gamma, \alpha, \vect{\Delta})$ (resp.\ $\vvect{\theta} = (\sigma, \vect{\Delta})$). \PB{Note that, as $\sigma^2 = 2 \alpha \gamma^2$, only the two parameters $\alpha$ and $\gamma$ are needed to describe the OUsun.} We denote by OUsun$(\vvect{\theta})$ (resp.\ BM$(\vvect{\theta})$) the OUsun (resp.\ BM) process running on the tree with parameters $\vvect{\theta}$.
    
%     \paragraph{Brownian Motion} The parameters needed to describe a BM are $\theta = (\mu, \sigma, \delta, \vect{\tau})$ or, equivalently, $\theta = (\gamma, \sigma, \Delta)$. We denote by BM$(\theta)$ the BM process running on the tree with parameters $\theta$.
%     
%     \paragraph{\OU} The parameters needed to describe the OUsun are $\theta = (\beta_1, \gamma, \alpha, \delta, \vect{\tau})$, or, equivalently, $\theta = (\gamma, \alpha, \Delta)$. We denote by OUsun$(\theta)$ the OUsun process running on the tree with parameters $\theta$.
  
  %%%%%%%%%%%%%%%%%%%%%%%%%%%%%%%%%%%%%
  \subsection{Incomplete Data Model Point of View}\label{subsec:latent_model}
    %The shifts have an effect on the trait values $X$ at the nodes of the tree.
    If the trait values were observed at all nodes of the tree, including ancestral ones, shifts would be characterized by unexpectedly large differences between a node and its parent. A way to mimic this favorable case is to use an incomplete data model, as described below. This representation of the model \PB{will be useful for the} parametric inference using an EM algorithm (Section~\ref{subsec:EM}).
    \paragraph{Brownian Motion} As the shifts occur directly in the mean of the process, we get:
%       \begin{equation}\label{eq:BM_rec_def}
%       \begin{cases}
% 	X_1 \sim \Ncal(\mu, \gamma^2) & \\
% 	X_i | X_{\pa(i)} \sim \Normal\left(X_{\pa(i)} + \sum_{k = 1}^K \Ibb\{\tau_k = b_i\} \delta_k , ~ \ell_i \sigma^2\right) & \forall i \in \intervalleentier2{m+n}
%       \end{cases}
%     \end{equation}
%     \begin{equation}\label{eq:BM_rec_def}
%       \begin{cases}
% 	\PB{X_1 = \mu} & \\
% 	X_i | X_{\pa(i)} \sim \Normal\left(X_{\pa(i)} + \PB{\Delta_i} , ~ \ell_i \sigma^2\right) & \forall i \in \intervalleentier2{m+n}
%       \end{cases}
%     \end{equation}
    \begin{equation}\label{eq:BM_rec_def}
	\PB{X_1 = \mu}
	\quad \text{ and } \quad
	\forall i \in \intervalleentier2{m+n}, \;
	X_i | X_{\pa(i)} \sim \Normal\left(X_{\pa(i)} + \PB{\Delta_i} , ~ \ell_i \sigma^2\right)
    \end{equation}
    The trait value at node $i$, $i \in \intervalleentier2{m+n}$, is centered on the value of its parent node $X_{\pa(i)}$, with a variance proportional to the evolution time $\ell_i$ between {$i$ and $\pa(i)$}. The effect of a \PB{non-zero shift $\Delta_i$}
    % shift $k$, $k \in \intervalleentier1K$,
on branch $b_i$ is simply to translate the trait value by \PB{$\Delta_i$}.
    % $\delta_k$.
    
    \paragraph{\OU} The shifts occur on the primary optimum $\vvect{\beta}$, which is piecewise constant. As the shifts are assumed to occur at nodes, the primary optimum is entirely defined by its initial value $\beta_1$ \PB{and its values $\beta_2, \dots, \beta_{n+m}$ on branches of the tree, where $\beta_i$ is the value on branch $b_i$ leading to node $i$.}
   % and its values $\beta_1, \dots, \beta_{n+m}$ {at} nodes of the tree. We thus have:
%     \begin{equation}\label{eq:OU_beta_rec_def}
%     \begin{cases}
%       \beta_1 = \beta_1 \\
%       \beta_i = \beta_{\pa(i)} + \sum_{k=1}^K \Ibb\{\tau_k = b_i\} \delta_k & \forall i \in \intervalleentier{2}{m + n}
%     \end{cases}
%     \end{equation}
%     \begin{equation}\label{eq:OU_beta_rec_def}
%     \begin{cases}
%       \NR{\beta_1 \in \R ~( = \mu \text{ for an OUsun})}\\
%       \beta_i = \beta_{\pa(i)} + \PB{\Delta_i} & \forall i \in \intervalleentier{2}{m + n}
%     \end{cases}
%     \end{equation}
    \begin{equation}\label{eq:OU_beta_rec_def}
	\NR{\beta_1 \in \R ~( = \mu \text{ for an OUsun})}
	\quad \text{ and } \quad
	\forall i \in \intervalleentier2{m+n}, \;
	\beta_i = \beta_{\pa(i)} + \PB{\Delta_i}
    \end{equation}
    Assuming that the root node is in the stationary state, we get:
    \begin{equation}\label{eq:OU_rec_def}
      \begin{cases}
	X_1 \sim \Ncal(\mu = \beta_1, \gamma^2 = \frac{\sigma^2}{2\alpha}) & \\
	X_i | X_{\pa(i)} \sim \Normal\left(X_{\pa(i)}e^{-\alpha \ell_i} + \beta_{i}(1 - e^{-\alpha \ell_i})
	, ~ \frac{\sigma^2}{2\alpha} (1 - e^{-2\alpha \ell_i})\right) & \forall i \in \intervalleentier2{m+n}
      \end{cases}
    \end{equation}
    The trait value \PB{at} node $i$ depends on both the trait value at the father node $X_{\pa(i)}$ and the value $\beta_i$ of the primary optimum on branch $b_i$. Contrary to the BM case, the shifts only appear indirectly {in the distributions of $X_i$s}, through the values of $\vvect{\beta}$, and with a shrinkage of $1-e^{-\alpha d}$ for shifts of age $d$, which makes recent shifts ($d$ small compared to $1/\alpha$) harder to detect. 
  
  %%%%%%%%%%%%%%%%%%%%%%%%%%%%%%%%%%%%%
  \subsection{Linear Regression Model Point of View}\label{subsec:linear_model}
  A more compact and direct representation of the model is to use the tree incidence matrix to link linearly the observed values (at the tips) with the shift values, as explained below. We will use this linear regression framework for the Lasso \citep{tibshirani96} initialization of the EM (Section~\ref{subsec:EM}) and the model selection procedure (Section~\ref{sec:model_selection}). It will also help us to explore identifiability issues raised in the next section.
  
    \paragraph{Matrix of a tree} It follows from the recursive definition of $\vect{X}$ that it is a Gaussian vector. In order to express its mean vector given the shifts, we introduce the tree squared matrix $\matr{U}$, of size $(m + n)$, defined by its general term: $U_{ij} = \Ibb\{j \in \Par(i)\}, \forall (i, j) \in \intervalleentier{1}{m+n}^2$. In other words, the $j^{\text{th}}$ column of this matrix, $j \in \intervalleentier{1}{m+n}$, is the indicator vector of the descendants of node $j$. To express the mean vector of the observed values $\vect{Y}$, we also need the sub-matrix $\matr{T}$, of size $n\times (m+n)$, composed of the bottom $n$ \PB{rows} of matrix $\matr{U}$, corresponding to the tips %: $T_{i,j} = U_{m + i, j}, \forall (i,j) \in \intervalleentier{1}{n}\times\intervalleentier{1}{m + n}$
    (see example~\ref{ex:tree_matrix} below). Likewise, the $i^{\text{th}}$ \PB{row} of $\matr{T}$, $i\in\intervalleentier1n$, is the indicator vector of the ancestors of leaf $m + i$.\par
    
    \paragraph{Brownian Motion} From the tree structure, we get:
%     \begin{align}
%      \vect{X} &= \matr{U}\vect{\Delta} + \vect{E_X} \label{eq:BM_lin_def_X} \\
%      \vect{Y} &= \matr{T}\vect{\Delta} + \vect{E_Y} \label{eq:BM_lin_def}
%     \end{align}
    \begin{equation}\label{eq:BM_lin_def_X}
     \vect{X} = \matr{U}\vect{\Delta} + \vect{E_X} 
     \qquad \text{and} \qquad
     \vect{Y} = \matr{T}\vect{\Delta} + \vect{E_Y} 
    \end{equation}
    Here, $\vect{E_X} \sim \Normal\left(\vect{0}, \matr{\Sigma_{XX}}\right)$ is a Gaussian error vector with co-variances 
%     \(
%     \left[\Sigma_{XX}\right]_{ij} = \gamma^2 + \sigma^2t_{ij}
%     \) 
    \(
    \left[\Sigma_{XX}\right]_{ij} = \PB{\sigma^2t_{ij}}
    \) 
    for any $1\leq i, j \leq m + n$, and $\vect{E_Y}$ is the vector made of the last $n$ coordinates of $\vect{E_X}$.%, and hence a Gaussian vector of variance:
%     \(
%     \matr{\Sigma_{YY}} = \gamma^2 + \sigma^2[t_{ij}]_{1\leq i, j \leq n}
%     \).\par
    
    \paragraph{\OU} For the OUsun, shifts occur on the primary optimum, and there is a lag term, so that:
%     \begin{align}
%      \vvect{\beta} &= \matr{U}\vect{\Delta} \label{eq:OU_lin_def_beta} \\
%      \vect{X} &= (\matr{U} - \matr{A}\matr{U}\matr{B})\vect{\Delta} + \vect{E_X} \label{eq:OU_lin_def_X}
%     \end{align}
    \begin{equation}\label{eq:OU_lin_def_X}
     \vvect{\beta} = \matr{U}\vect{\Delta}
     \qquad \text{and} \qquad
     \vect{X} = (\matr{U} - \matr{A}\matr{U}\matr{B})\vect{\Delta} + \vect{E_X}
    \end{equation}
    where $\matr{A}=\Diag(e^{-\alpha t_i}, 1\leq i\leq m+n)$ and $\matr{B} = \Diag(0, e^{\alpha t_{\pa(i)}}, 2\leq i\leq m+n)$ are diagonal matrices of size $m+n$. As previously, $\vect{E_X} \sim \Normal\left(\vect{0}, \matr{\Sigma_{XX}}\right)$, but
    \(
    \matr{\Sigma_{XX}} = \gamma^2[e^{-\alpha d_{ij}}]_{1\leq i, j \leq m + n}
    \). As the tree is ultrametric, this expression simplifies to the following one when considering only observed values:
    \begin{equation}\label{eq:OU_lin_def}
     \vect{Y} = \matr{T}\matr{W}(\alpha)\vect{\Delta} + \vect{E_Y}
    \end{equation}
    where $\vect{E_Y}$ is the Gaussian vector made of the last $n$ coordinates of $\vect{E_X}$, and $\matr{W}(\alpha) = \Diag(1, 1 - e^{-\alpha (h - t_{\pa(i)})}, 2\leq i\leq m+n)$ is a diagonal matrix of size $m + n$. Note that if $\alpha$ is positive, then $\alpha (h-t_{\pa(i)})>0$ for any $i \in \intervalleentier{1}{m+n}$, and $\matr{W}(\alpha)$ is invertible. %as $\forall i \in \intervalleentier{1}{m + n}, h>t_{\pa(i)}$, $\matr{W}(\alpha)$ is invertible if $\alpha>0$.\par
    
        \begin{example}\label{ex:tree_matrix}
    The tree presented in Figure~\ref{fig:tree_notations} has five tips and one shift on branch $4+3=7$, 
    % with parent node $2$,
    so:
    \[
    \matr{U}=
    \bordermatrix{
    \,\,&Z_1\!\!\!&Z_2\!\!\!&Z_3\!\!\!&Z_4\!\!\!&Y_1\!\!\!&Y_2\!\!\!&Y_3\!\!\!&Y_4\!\!\!&Y_5\cr
    Z_1\,\,&1&0&0&0&0&0&0&0&0\cr
    Z_2\,\,&1&1&0&0&0&0&0&0&0\cr
    Z_3\,\,&1&1&1&0&0&0&0&0&0\cr
    Z_4\,\,&1&0&0&1&0&0&0&0&0\cr
    \noalign{\smallskip\hrule\smallskip}
    Y_1\,\,&1&0&0&1&1&0&0&0&0 \cr
    Y_2\,\,&1&0&0&1&0&1&0&0&0 \cr
    Y_3\,\,&1&1&0&0&0&0&1&0&0 \cr
    Y_4\,\,&1&1&1&0&0&0&0&1&0 \cr
    Y_5\,\,&1&1&1&0&0&0&0&0&1 \cr 
    }
    \begin{matrix}% matrix for right braces 
    \vphantom{\coolrightbrace{0\\0\\0\\0}{\matr{T}}}\\
    \coolrightbrace{0\\0\\0\\0\\1 }{\matr{T}}
    \end{matrix}
    \text{ and }
          \vect{\Delta} = \begin{pmatrix}
      \mu\\
      0\\
      0\\
      0\\
      0\\
      0\\
      \delta_1\\
      0\\
      0\\
      \end{pmatrix} 
    \]
    And, respectively, for a BM or an OUsun ($\mu = \beta_1$):
%     \[
%       \PB{\Espe{\vect{Y}} = } \matr{T}\vect{\Delta} = \begin{pmatrix}
%       \mu\\
%       \mu\\
%       \mu + \delta_1\\
%       \mu\\
%       \mu\\
%       \end{pmatrix}
%       \text{ or }
%       \PB{\Espe{\vect{Y}} = } \matr{T}\matr{W}(\alpha)\vect{\Delta} = \begin{pmatrix}
%       \beta_1\\
%       \beta_1\\
%       \beta_1 + \delta_1(1 - e^{-\alpha (h - t_{2})})\\
%       \beta_1\\
%       \beta_1\\
%       \end{pmatrix}
%     \]
    \begin{align*}
      \Espe{\vect{Y}} & = \matr{T}\vect{\Delta} = (
      \mu, 
      \mu, 
      \mu + \delta_1, 
      \mu, 
      \mu)^T & \text{(BM)}
      \\
      \Espe{\vect{Y}} &=  \matr{T}\matr{W}(\alpha)\vect{\Delta} = (
      \beta_1,
      \beta_1,
      \beta_1 + \delta_1(1 - e^{-\alpha (h - t_{2})}),
      \beta_1,
      \beta_1)^T & \text{(OU)}
    \end{align*}
    \end{example}
    
%     \begin{example}\label{ex:tree_matrix}
%     {\small 
%     The tree presented in Figure~\ref{fig:tree_notations} has five tips and one shift on branch $4+3=7$, so:
%     \[
%     U=
%     \bordermatrix{
%     \,\,&Z_1\!\!\!&Z_2\!\!\!&Z_3\!\!\!&Z_4\!\!\!&Y_1\!\!\!&Y_2\!\!\!&Y_3\!\!\!&Y_4\!\!\!&Y_5\cr
%     Z_1\,\,&1&0&0&0&0&0&0&0&0\cr
%     Z_2\,\,&1&1&0&0&0&0&0&0&0\cr
%     Z_3\,\,&1&1&1&0&0&0&0&0&0\cr
%     Z_4\,\,&1&0&0&1&0&0&0&0&0\cr
%     \noalign{\smallskip\hrule\smallskip}
%     Y_1\,\,&1&0&0&1&1&0&0&0&0 \cr
%     Y_2\,\,&1&0&0&1&0&1&0&0&0 \cr
%     Y_3\,\,&1&1&0&0&0&0&1&0&0 \cr
%     Y_4\,\,&1&1&1&0&0&0&0&1&0 \cr
%     Y_5\,\,&1&1&1&0&0&0&0&0&1 \cr 
%     }
%     \begin{matrix}% matrix for right braces 
%     \vphantom{\coolrightbrace{0\\0\\0\\0}{T}}\\
%     \coolrightbrace{0\\0\\0\\0\\1 }{T}
%     \end{matrix}
%     \text{, } 
%       \Delta = \begin{pmatrix}
%       \mu\\
%       0\\
%       0\\
%       0\\
%       0\\
%       0\\
%       0\\
%       \delta_1\\
%       0\\
%       0\\
%       \end{pmatrix} 
%     \text{ and }
%       T\Delta = \begin{pmatrix}
%       \mu\\
%       \mu\\
%       \mu + \delta_1\\
%       \mu\\
%       \mu\\
%       \end{pmatrix} 
%     \]
%     }
%     \end{example}
    
    \paragraph{Space of Expectations}
    Expressions~\ref{eq:BM_lin_def_X} and~\ref{eq:OU_lin_def} allow us to link the parameter $\vvect{\theta}$ to the probability distribution of observations $\vvect{Y}$ and to explore identifiability issues. In this linear formulation, detecting shifts boils down to identifying the non-zero components of $\vect{\Delta}$. The following lemma highlights the parallels between solutions of the BM and OUsun processes:
            
      \begin{lemma}[Similar Solutions]\label{lemma:eq_BM_OUsun}
      Let $\vect{m_Y} \in \R^n$ be a vector, $\Tr$ an ultrametric tree, $\alpha$ a positive real number, and $\sigma$, $\gamma$ non-negative real numbers.
      Then there exists at least one vector $\vect{\Delta}^{BM}$, $\vect{\Delta}^{BM} \in \R^{m+n}$ (respectively, $\vect{\Delta}^{OU} \in \R^{m+n}$), such that the vector of expectations at the tips of a BM$(\sigma, \vect{\Delta}^{BM})$ (respectively, an OUsun$(\gamma, \alpha, \vect{\Delta}^{OU})$) running on the tree $\Tr$ is exactly $\vect{m_Y}$.\par
      Furthermore, $\vect{\Delta}^{BM}$ is a solution to this problem for the BM if and only if $\vect{\Delta}^{OU} = \matr{W}(\alpha)^{-1} \vect{\Delta}^{BM}$ is a solution for the OUsun, and $\vect{\Delta}^{BM}$ and $\matr{W}(\alpha)^{-1} \vect{\Delta}^{BM}$ have the same support. These two vectors are said to be \emph{similar}.
      \end{lemma}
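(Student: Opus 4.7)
The plan is to use the compact linear formulations~(\ref{eq:BM_lin_def_X}) and~(\ref{eq:OU_lin_def}) and reduce everything to elementary linear algebra on the matrices $\matr{T}$ and $\matr{W}(\alpha)$.

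First I would establish existence for the BM case. From~(\ref{eq:BM_lin_def_X}), $\Espe{\vect{Y}} = \matr{T}\vect{\Delta}^{BM}$, so the problem is whether the linear system $\matr{T}\vect{\Delta} = \vect{m_Y}$ always admits a solution in $\R^{m+n}$. Because $\Par(i)$ contains $i$ itself, the $i$-th tip row of $\matr{T}$ has a $1$ in the column indexed by tip $i$; consequently, the $n$ last columns of $\matr{T}$ form the $n\times n$ identity matrix. Hence $\matr{T}$ has full row rank $n$ and the system is solvable for every $\vect{m_Y}$. (A trivial solution even sets $\Delta_1=\dots=\Delta_m=0$ and $\Delta_{m+i}=m_{Y,i}$, but of course many other solutions exist: this non-uniqueness is exactly what motivates the identifiability section to follow.)

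Second, I would pass from BM to OUsun. By~(\ref{eq:OU_lin_def}), $\Espe{\vect{Y}} = \matr{T}\matr{W}(\alpha)\vect{\Delta}^{OU}$. The key observation is that $\matr{W}(\alpha) = \Diag\bigl(1,\,1-e^{-\alpha(h-t_{\pa(i)})},\,2\le i\le m+n\bigr)$ is invertible as soon as $\alpha>0$: the tree being ultrametric and $\pa(i)$ being a strict ancestor of a tip, $h - t_{\pa(i)} > 0$ and each diagonal entry is strictly positive. Thus $\vect{\Delta}^{BM}$ satisfies $\matr{T}\vect{\Delta}^{BM}=\vect{m_Y}$ if and only if $\vect{\Delta}^{OU}\eqdef\matr{W}(\alpha)^{-1}\vect{\Delta}^{BM}$ satisfies
\[
\matr{T}\matr{W}(\alpha)\vect{\Delta}^{OU}
= \matr{T}\matr{W}(\alpha)\matr{W}(\alpha)^{-1}\vect{\Delta}^{BM}
= \matr{T}\vect{\Delta}^{BM}
= \vect{m_Y},
\]
which gives simultaneously existence for OUsun (as the image of any BM solution) and the stated equivalence between the two solution sets.

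Finally, preservation of support is immediate: $\matr{W}(\alpha)^{-1}$ is diagonal with strictly positive entries, so $\bigl[\matr{W}(\alpha)^{-1}\vect{\Delta}^{BM}\bigr]_i = 0$ if and only if $\Delta^{BM}_i = 0$. Hence $\vect{\Delta}^{BM}$ and $\matr{W}(\alpha)^{-1}\vect{\Delta}^{BM}$ share the same support. There is no real obstacle in the proof; the only point one needs to verify with some care is the strict positivity of the diagonal of $\matr{W}(\alpha)$, which crucially uses the ultrametric assumption on $\Tr$ and positivity of $\alpha$.
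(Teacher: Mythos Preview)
Your proof is correct and follows essentially the same approach as the paper: both reduce the claim to the linear formulations~(\ref{eq:BM_lin_def_X}) and~(\ref{eq:OU_lin_def}) and to the fact that $\matr{W}(\alpha)$ is diagonal and invertible for $\alpha>0$. Your argument is slightly more detailed than the paper's (which simply asserts that the two linear maps span $\R^n$), in that you explicitly identify the $n\times n$ identity block in the last $n$ columns of $\matr{T}$ to justify full row rank, but this is a cosmetic difference rather than a different route.
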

      
      \begin{proof}
        The first part of this lemma follows directly from formulas~\ref{eq:BM_lin_def_X} (BM) and \ref{eq:OU_lin_def} (OU). Indeed, the maps $\vect{\Delta} \mapsto \matr{T}\vect{\Delta}$ and $\vect{\Delta} \mapsto \matr{T}\matr{W}(\alpha)\vect{\Delta}$ both span $\R^{n}$. The second part of the lemma is a consequence of $\matr{W}(\alpha)$ being diagonal and invertible (for $\alpha > 0$).
      \end{proof}
      
      \begin{remark}
      Lemma~\ref{lemma:eq_BM_OUsun} shows that the OUsun and BM processes that induce a given {$\vect{m_Y}$} use shifts located on the same branches, although they may differ on other parameters.
      \end{remark}
      
%%%%%%%%%%%%%%%%%%%%%%%%%%%%%%%%%%%%%%%%%%%%%%%%%%%%%%%%%%%%%%%%%%%%%
\section{Identifiability and Complexity of a Model} \label{sec:identifiability}

  %%%%%%%%%%%%%%%%%%%%%%%%%%%%%%%%%%%%%
  \subsection{Identifiability Issues}\label{subsec:identifiability}
  As we only have access to $\vect{Y}$, and not $\vect{X}$, we only have partial information about the shifts occurrence on the tree. In fact, several different allocations of the shifts can produce the same trait distribution at the tips, and hence are not identifiable. \PB{In other words, there exists parameters $\vvect{\theta} \neq \vvect{\theta}'$ with the same likelihood function: $p_{\vvect{\theta}}(\cdot) = p_{\vvect{\theta}'}(\cdot)$. Note that the notion of identifiability is intrinsic to the model and affects all estimation methods.}
  \PB{Restricting ourselves to the parsimonious allocations of shifts only partially alleviates this issue, and, using a \enquote{random cluster model} representation of the problem, we are able to enumerate, first, all the equivalent solutions to a given problem, and, second, all the equivalence classes for a given number of shifts.} 
  
  \begin{figure}[htb]
	\centering
	\def\svgwidth{0.8\textwidth}
	\import{../Figures/}{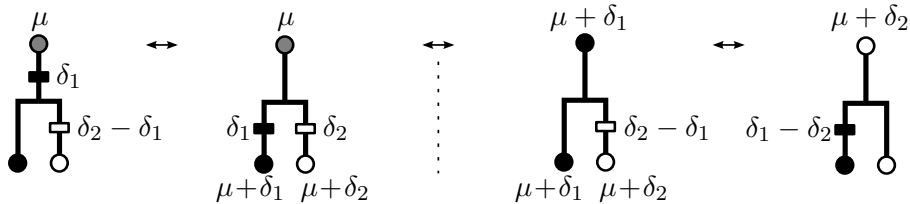}
	\caption{Equivalent allocations in the BM case. Mean tip values are represented by colors and equal for all allocations. The two allocations on the right are parsimonious.}
	\label{fig:basic_equivalencies}
	\end{figure}
	
      \paragraph{No Homoplasy Assumption} We assume in the following that there is no convergent evolution. This means that each shift creates a new (and unique) mean trait value for extant species that are below it. This assumption is reasonable considering that shifts are real valued and makes the model similar to \enquote{infinite alleles} models in population genetics. This assumption confines but does not eliminate the identifiability issue, as seen in Figure~\ref{fig:basic_equivalencies}.
      %every new shift creates a new color, and hence that each group has its own color and equivalently its own mean at the tips (all the coloring we consider are assumed to be adapted to the processes studied). Under this hypothesis, we get the following proposition:

  \subsubsection{Definition of the problem}
  
    %\paragraph{Equivalent Allocations} 
    
Figure~\ref{fig:basic_equivalencies} shows a simple example where the model is not identifiable in the BM case. Here, four distinct allocations give the same mean values $(\mu + \delta_1, \mu + \delta_2)$ at the tips. The lack of identifiability is due to the non-invertibility of the tree matrix $\matr{T}$.
    
      \begin{proposition}[Kernel of the Tree Matrix $\matr{T}$] \label{prop:kernel_of_T}
      Let $i$ be an internal node, $i\in\intervalleentier1{m}$, with $L_i$ \PB{children} nodes $(d_1, \cdots, d_{L_i}) \in \intervalleentier2{m+n}^{L_i}$. Then the vector $\vect{K}^i$ defined as follow:
      \[
      \forall j \in \intervalleentier1{m+n}, K^i_j = \begin{cases}
      									1 & \text{if } j = i\\
									-1 & \text{if } j \in (d_1, \cdots, d_{L_i})\\
									0 & \text{otherwise}
      	       							       \end{cases}
      \]
      is in the kernel of $\matr{T}$. In addition, the $m$ vectors constructed this way form a basis of the kernel space of $\matr{T}$.
      \end{proposition}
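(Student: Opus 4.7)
My plan is to split the statement into three independent claims and address them in order: (i) each $\vect{K}^i$ lies in $\ker \matr{T}$; (ii) the $m$ vectors $\{\vect{K}^i\}_{i=1}^{m}$ are linearly independent in $\R^{m+n}$; (iii) $\dim \ker \matr{T} = m$, which combined with (i)--(ii) gives the basis statement.

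For (i), I would argue column-wise. Recall that the $j$-th column of $\matr{T}$ is the indicator vector, restricted to the tips, of the descendants of node $j$ (with the convention $j \in \Par(j)$). For an internal node $i$ with children $d_1,\dots,d_{L_i}$, the set of tips descending from $i$ is the disjoint union of the sets of tips descending from each $d_k$. Thus the $i$-th column of $\matr{T}$ equals $\sum_{k=1}^{L_i} \matr{T}_{\cdot, d_k}$, which is exactly the linear relation $\matr{T}\vect{K}^i = 0$.

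For (ii), I would use the fact that the support of $\vect{K}^i$ always contains the internal node $i$ with a coefficient $+1$, while any other $\vect{K}^j$ contributes to coordinate $i$ only when $j = \pa(i)$, in which case it contributes $-1$. Hence if $\sum_{i=1}^{m} c_i \vect{K}^i = 0$, then reading off coordinate $i$ gives $c_i - c_{\pa(i)}\mathbb{1}\{i \neq 1\} = 0$ for every internal node $i$. A root-to-leaves induction (starting from $c_1 = 0$) then forces $c_i = 0$ for all $i \in \intervalleentier{1}{m}$.

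For (iii), I would establish that $\matr{T}$ has full row rank $n$: the last $n$ columns of $\matr{T}$, indexed by the tips themselves, form the $n \times n$ identity matrix (each leaf $m+i$ belongs to $\Par(m+i)$ but to no other $\Par(m+j)$). By the rank-nullity theorem applied to $\matr{T} : \R^{m+n} \to \R^{n}$, the kernel has dimension $(m+n) - n = m$. Combined with (i)--(ii), the $m$ linearly independent vectors $\vect{K}^i$ necessarily span $\ker \matr{T}$.

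I do not anticipate any substantial obstacle; the slight subtlety is ensuring the sign and indexing conventions of the triangular argument in (ii), and verifying that the convention $j \in \Par(j)$ used in the definition of $\matr{T}$ correctly makes each leaf a self-ancestor, so that the identity block in (iii) appears.
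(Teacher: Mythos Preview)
Your argument is correct in all three parts; the disjoint-descendants observation in (i), the triangular recursion $c_i = c_{\pa(i)}$ in (ii), and the identity block on the tip columns in (iii) are exactly the right ingredients, and the rank--nullity step closes the proof cleanly.

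The paper defers its proof of this proposition to an external appendix, so a line-by-line comparison is not possible from the main text. What the surrounding material does reveal is that the paper packages this result together with Lemma~\ref{lemma:U_invertible}, which states that $(\vect{K}^1,\dots,\vect{K}^m,\vect{b}_{m+1},\dots,\vect{b}_{m+n})$ is a basis of $\R^{m+n}$ and that the full tree matrix $\matr{U}$ is precisely the change-of-basis matrix from this basis to the canonical one. This strongly suggests the paper's route is to work with $\matr{U}$ rather than $\matr{T}$ directly: once one checks that $\matr{U}\vect{K}^i = \vect{b}_i$ for each internal node $i$ (the same descendant-partition identity you use in (i), but read at all nodes rather than only at tips) and $\matr{U}\vect{b}_{m+j} = \vect{b}_{m+j}$ for each tip, invertibility of $\matr{U}$ and the basis property follow simultaneously, and the kernel of $\matr{T}$ drops out because $\matr{T}$ is the bottom $n$ rows of $\matr{U}$. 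Your approach is slightly more direct for the kernel statement in isolation, since you bypass $\matr{U}$ entirely via rank--nullity; the paper's packaging has the advantage of yielding Lemma~\ref{lemma:U_invertible} in the same stroke.
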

      These kernel vectors effectively \enquote{cancel out} a shift on a branch by balancing it with the opposite shift on all immediate \PB{child} branches. Note that the root mean value is treated as a shift. 
The following lemma describes the relationships that exist between these kernel vectors and the tree matrix $\matr{U}$ defined in Section~\ref{subsec:linear_model}. 
  \NR{The technical proofs of Proposition~\ref{prop:kernel_of_T} and Lemma~\ref{lemma:U_invertible} are postponed to Appendix~\ref{supp:proofs_identifiability}.}
    
    \begin{lemma}\label{lemma:U_invertible}
      Let $b$ \PB{be} the canonical basis of $\R^{m + n}$, and $S$ a supplementary space of $\ker(\matr{T})$. Then $b' = (\vect{K}^1, \cdots, \vect{K}^m, \vect{b}_{m + 1}, \cdots, \vect{b}_{m+n})$ is a basis adapted to the decomposition $\ker(\matr{T})\oplus S$, and the matrix $\matr{U}$ (as defined in Section~\ref{subsec:linear_model}) is the \PB{change of basis} matrix between $b$ and $b'$.\\
      As a consequence, $\matr{U}$ is invertible.\\
      \end{lemma}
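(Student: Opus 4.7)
The plan is to execute three short steps: (i) exhibit an explicit supplementary $S$ of $\ker(\matr{T})$, (ii) assemble $b'$ from bases of $\ker(\matr{T})$ and $S$, and (iii) verify that $\matr{U}$ is the change-of-basis matrix from $b$ to $b'$, which will automatically yield the invertibility of $\matr{U}$.

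For (i), I would take $S = \Span(\vect{b}_{m+1}, \ldots, \vect{b}_{m+n})$, the subspace spanned by the indicators of the leaves. For $\vect{v} \in S$ and any $i \in \intervalleentier{1}{n}$, the $i$-th coordinate of $\matr{T}\vect{v}$ equals $\sum_{j = m+1}^{m+n} \Ibb\{j \in \Par(m+i)\} v_j = v_{m+i}$, since a leaf is never a strict ancestor of another leaf. Hence $\matr{T}$ restricted to $S$ is injective, so $\ker(\matr{T}) \cap S = \{0\}$; combined with $\dim \ker(\matr{T}) = m$ from Proposition~\ref{prop:kernel_of_T} and $\dim S = n$, this gives $\R^{m+n} = \ker(\matr{T}) \oplus S$. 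Step (ii) is then immediate: $(\vect{K}^1, \ldots, \vect{K}^m)$ is a basis of $\ker(\matr{T})$ by Proposition~\ref{prop:kernel_of_T}, while $(\vect{b}_{m+1}, \ldots, \vect{b}_{m+n})$ is the canonical basis of $S$, so $b'$ is a basis of $\R^{m+n}$ adapted to the decomposition $\ker(\matr{T}) \oplus S$.

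For (iii), let $\matr{V}$ be the matrix whose $k$-th column is the expression of the $k$-th element of $b'$ in the canonical basis $b$; by construction $\matr{V}$ is the change-of-basis matrix from $b'$ to $b$. I would show $\matr{U}\matr{V} = \matr{I}_{m+n}$ column by column. Denoting by $\vect{U}_j$ the $j$-th column of $\matr{U}$ (the indicator of the descendants of $j$), the tree structure gives the fundamental identity $\vect{U}_k = \vect{b}_k + \sum_{d \text{ child of } k} \vect{U}_d$, since the descendants of a node $k$ partition into $k$ itself and the descendants of its children. For $k$ a leaf, the $k$-th column of $\matr{V}$ is $\vect{b}_k$, and $\matr{U}\vect{b}_k = \vect{U}_k = \vect{b}_k$ since a leaf has no other descendants. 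For $k$ an internal node, the $k$-th column of $\matr{V}$ is $\vect{K}^k$, and $\matr{U}\vect{K}^k = \vect{U}_k - \sum_{d \text{ child of } k} \vect{U}_d = \vect{b}_k$ by the identity. Either way the $k$-th column of $\matr{U}\matr{V}$ is $\vect{b}_k$, so $\matr{U}\matr{V} = \matr{I}_{m+n}$, proving simultaneously that $\matr{U}$ is invertible and that $\matr{U} = \matr{V}^{-1}$ is the change-of-basis matrix from $b$ to $b'$.

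The whole argument is essentially bookkeeping: the only content-bearing step is the column identity $\vect{U}_k = \vect{b}_k + \sum_{d} \vect{U}_d$, which is just the recursive tree structure expressed in matrix form, and it is where any mild care is needed.
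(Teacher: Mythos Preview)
Your proof is correct. The paper defers its own proof of this lemma to Appendix~\ref{supp:proofs_identifiability}, which is not included in the provided source, so a line-by-line comparison is not possible; however, your argument---taking $S=\Span(\vect{b}_{m+1},\ldots,\vect{b}_{m+n})$, checking $\ker(\matr T)\cap S=\{0\}$ via $(\matr T\vect v)_i=v_{m+i}$, and then verifying $\matr U\matr V=\matr I$ column by column through the recursive identity $\vect U_k=\vect b_k+\sum_{d\text{ child of }k}\vect U_d$---is exactly the natural route that the lemma's formulation invites, and almost certainly coincides with the intended proof.

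One minor remark on presentation: the lemma as stated begins ``Let $S$ be a supplementary space of $\ker(\matr T)$'', which read literally would allow $S$ to be arbitrary, whereas the conclusion forces $S=\Span(\vect b_{m+1},\ldots,\vect b_{m+n})$. You handle this correctly by simply exhibiting this specific $S$, which is the only sensible reading; you might note explicitly that the lemma is really asserting that this particular $S$ is supplementary, rather than treating $S$ as given.
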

      
%       \begin{proof}[Proof of lemma~\ref{lemma:U_invertible}]
%       First, $(\vect{b}_{m + 1}, \cdots, \vect{b}_{m+n})$ is a family of $n$ independent vectors of $S$ of dimension $n$, so is a basis of $S$, and $b'$ is a basis adapted to $\ker(\matr{T})\oplus S$.\par
%       Let $i\in\intervalleentier{1}{m+n}$. Let's show that $\matr{U}\vect{b}'_{i} = \vect{b}_i$. If $m+1\leq i \leq m+n$, then $\vect{b}'_i = \vect{b}_i$, and $\matr{U}\vect{b}_i = \vect{b}_i$ is the $i^{th}$ column of $\matr{U}$. Otherwise, if $1\leq i \leq m$, then $\vect{b}'_i=\vect{K}^{i}${, and, from equation~\ref{eq:kernel_of_T}, $\matr{U}\vect{b}'_i = \vect{b}_i$.} This shows that $\matr{U}$ is the change of basis matrix between $b$ and $b'$.
%       \end{proof}
  
    \paragraph{\enquote{Random Cluster Model} Representation} When inferring the shifts, we have to keep in mind this problem of non-identifiability, and be able to choose, if necessary, one or several possible allocations among all the equivalent ones.
    %that are \enquote{optimal} in a certain way.
    In order to study the properties of the allocations, we use a \emph{random cluster model}, as defined in \citet{mossel2004}. The following definition states the problem as a node coloring problem.
    
%       \begin{definition}[Group of nodes]\label{def:groups_of_nodes}
%        For a given shift allocation, two nodes are said to be in the same group if there is no shift in the path on the tree going from one node to the other.
%       \end{definition}
%       \begin{remark}
%       Note that, in the case of the BM, all nodes in one group share the same expectation, while in the OU case, all nodes in one group only share the same primary optimum. If the tree is ultrametric, for the OUsun, all the \emph{tips} in one group share the same expectation.
%       \end{remark}

      %oin the cases we are studying (BM and OUsun), all the tips in a given group share a same expectation value. However, this is not true in general (for instance, on a non-ultrametric tree). If we only look at the evolution of groups on the tree
      
      %In the following, we will study all the possible ways of allocating nodes in different groups according to a node-coloring process defined as below.
      
      %If each group is represented by a color, then knowing the color of each node is equivalent to knowing the position of each shift.
      
      \begin{definition}[Node Coloring]
      Let $\Cr_K$ be a set of $K$ arbitrary \enquote{colors}, $K \in \N^*$. For a given shift allocation, the color of each node is given by the application $B: \intervalleentier{1}{m+n} \to \Cr_K$ recursively defined in the following way:
      \begin{itemize}
       \item Choose a color $c\in\Cr_K$ for the root: $B(1) = c$.
       \item For a node $i$, $i\in\intervalleentier2{m+n}$, set $B(i)$ to $B(\pa(i))$ if there is no shift on branch $i$, otherwise choose another color $c$, $c \in \Cr_K\setminus\{B(\pa(i))\}$, and set $B(i)$ to $c$.
      \end{itemize}
      Hereafter, we identify $(\Cr_K)^{\intervalleentier{1}{m+n}}$ with $(\Cr_K)^{m+n}$ and refer to a node coloring indifferently as an application or a vector.
      \end{definition}
      
       As the shifts only affect $\Espe{\vect{X}}$ and we only have access to $\Espe{\vect{Y}}$, we identify colors with the distinct values of $\Espe{\vect{Y}}$:
      
       \begin{definition}[Adapted Node Coloring]
         A node coloring is said to be \emph{adapted} to a shifted random process on a tree if two \emph{tips} have the same color if and only if they have the same mean value under that process.
       \end{definition}
       
       % \begin{remark}
       %   Under the no-homoplasy hypothesis, each \emph{group} has its own \emph{color} under an adapted node coloring. This might however not be true is the general case, and the results stated in the following of this subsection on node coloring still hold in case of convergent evolution.
       % \end{remark}
       
       \begin{proposition}[Adapted Coloring for BM and OUsun]\label{prop:eq_color_shifts}
         Let $\sigma$ and $\gamma$ be two non-negative real numbers, and $\alpha$ a positive real number. Then:
         \begin{itemize}
         \item[(i)] In the BM case, if $\Cr$ is the \PB{set} of possible mean values taken by the nodes of the tree, then {the} knowledge of the {node colors} is equivalent to the knowledge of {$\vect{\Delta}$}. Furthermore, the associated node coloring is adapted to the original BM.
         \item[(ii)] In the OUsun case, from lemma~\ref{lemma:eq_BM_OUsun}, we can find a similar BM process, i.e.\ with shifts on the same branches. Then the knowledge of the node coloring associated to this similar BM process is equivalent to the knowledge of the vector of shifts of the OUsun, and the node coloring obtained is adapted to the original OUsun.
         \end{itemize}
      \end{proposition}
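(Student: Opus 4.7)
My plan is to treat parts (i) and (ii) separately, handling the BM case first and then reducing the OUsun case to it via Lemma~\ref{lemma:eq_BM_OUsun}.

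For part (i), the starting point is the observation that equation~\ref{eq:BM_lin_def_X} gives $\Espe{X_i} = (\matr{U}\vect{\Delta})_i = \sum_{j \in \Par(i)} \Delta_j$, so the expected value at any node is the sum of the initial value $\mu = \Delta_1$ and all shifts on the path from the root to that node. I would then define the map $B(i) \eqdef \Espe{X_i}$ and check that it satisfies the recursive definition of a node coloring: for every non-root node $i$, $B(i) = B(\pa(i))$ when $\Delta_i = 0$, and $B(i) = B(\pa(i)) + \Delta_i$ otherwise; the no-homoplasy assumption guarantees that the latter value is genuinely different from $B(\pa(i))$ (and from any previously used color), so this is a legitimate coloring in the sense of Definition~3.1. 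Equivalence between knowledge of $\vect{\Delta}$ and knowledge of the coloring is then immediate in both directions: $\vect{\Delta}$ determines $B$ by the formula above, and conversely one recovers $\vect{\Delta}$ by $\Delta_1 = B(1)$ and $\Delta_i = B(i) - B(\pa(i))$ for $i \geq 2$. Adaptation is tautological since two tips share the same color iff they share the same value of $B$, which is their mean.

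For part (ii), I would first invoke Lemma~\ref{lemma:eq_BM_OUsun}: given an OUsun$(\gamma, \alpha, \vect{\Delta}^{OU})$, there is a similar BM$(\sigma, \vect{\Delta}^{BM})$ with $\vect{\Delta}^{BM} = \matr{W}(\alpha)\vect{\Delta}^{OU}$, supported on the same branches, and producing the same vector of expectations $\vect{m_Y}$ at the tips. Part (i) applied to this BM gives an adapted coloring in bijection with $\vect{\Delta}^{BM}$. Since $\matr{W}(\alpha)$ is diagonal and invertible for $\alpha > 0$, the bijection $\vect{\Delta}^{OU} = \matr{W}(\alpha)^{-1}\vect{\Delta}^{BM}$ transfers this equivalence to the OUsun shift vector. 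Finally, because the similar BM and the OUsun share the same tip means, two tips have the same mean under the OUsun iff they have the same mean under the BM iff (by adaptation in the BM case) they share the same color — so the coloring is adapted to the OUsun as well.

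The argument is essentially a bookkeeping exercise rather than a deep result; the only subtlety is that the node coloring is only defined in terms of the abstract recursion of Definition~3.1, so one must verify that the concrete map $B(i) = \Espe{X_i}$ really produces a coloring of that form, and in particular that the no-homoplasy assumption is used in an essential way to ensure the new colors introduced after each shift are distinct from all previously used ones. For the OUsun part, the potential pitfall is that internal-node means do not agree between the OU and its similar BM — so one must be careful to invoke Definition~3.2 only on tips, where the two processes do coincide.
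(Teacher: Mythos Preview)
Your proof is correct and follows essentially the same strategy as the paper's: identify the coloring with the vector $\Espe{\vect{X}} = \matr{U}\vect{\Delta}$ for (i), and reduce (ii) to (i) via Lemma~\ref{lemma:eq_BM_OUsun}. The one noteworthy difference is in how the bijection in (i) is established: the paper appeals to the invertibility of $\matr{U}$ from Lemma~\ref{lemma:U_invertible}, whereas you write down the inverse explicitly as $\Delta_1 = B(1)$, $\Delta_i = B(i) - B(\pa(i))$. Your formula is of course exactly what $\matr{U}^{-1}$ computes, so the content is identical; your version is self-contained and avoids the forward reference, while the paper's is terser. You also spell out the adaptation check and the role of the no-homoplasy assumption, which the paper leaves implicit --- a minor point, since $\Delta_i \neq 0$ already forces $B(i) \neq B(\pa(i))$ without any homoplasy hypothesis, but harmless.
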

      
      \begin{proof}[Proof of Proposition~\ref{prop:eq_color_shifts}]
       The proof of $(i)$ relies on expression~\ref{eq:BM_lin_def_X}, that states that $\Espe{\vect{X}} = \matr{U}\vect{\Delta}$. Defining $\Cr$ as the set off all distinct values of $\Espe{\vect{X}}$, we can identify $\Espe{\vect{X}}$ with the node coloring application that maps any node $i$ with $\Espe{X}_i$. Since $\matr{U}$ is invertible (see lemma~\ref{lemma:U_invertible} above), we can go from one formalism to the other.\\
       For $(ii)$, we use lemma~\ref{lemma:eq_BM_OUsun} to find a similar BM, and then use (i).
      \end{proof}
      
\noindent From now on, we will study the problem of shifts allocation as a discrete-state coloring problem.
      
      % \begin{remark}
      %   With null probability, if shits values compensate exactly, two tips in two different groups can have the same mean, and hence the same color under the node coloring procedure described in the proposition above. We will get rid of this case thanks to the ``no homoplasy'' hypothesis described below. 
      % \end{remark}
      
      % \begin{remark}
      %   There is no such identification in the OU case, as the mean varies inside one group. However, in the ultrameric case, one can identify the OUsun process to a BM using lemma~\ref{lemma:eq_BM_OUsun}, and then use it to define a node coloring process. This is to ensure that the node coloring obtained is adapted to the original OUsun. %the second part of proposition~\ref{prop:eq_color_shifts}, that state that two tips have the same color if and only if they have the same mean.
      % \end{remark}
      
      \subsubsection{Parsimony} As we saw on Figure~\ref{fig:basic_equivalencies} there are multiple {colorings} of the internal nodes that lead to a given tips {coloring}. Among all these solutions, we choose to study only the \emph{parsimonious} ones. This property can be seen as an optimality condition, as defined below: 
      
      \begin{definition}[Parsimonious Allocation]
        Given a vector of mean values at the tips produced by a given shifted stochastic process running on the tree, an adapted node coloring is said to be \emph{parsimonious} if it has a minimum number of color changes. We denote by $\Sr_K^P$ the \PB{set} of parsimonious allocations of $K$ shifts on the $(m+n-1)$ branches of the tree (not counting the root branch).
      \end{definition}

      As $K$ shifts cannot produce more than $K+1$ colors, we can define an application $\phi: \Sr_K^P \to (\Cr_{K+1})^n$ that {maps a} parsimonious allocation of shifts {to its} associated tip partition.
      
      \begin{definition}[Equivalence]\label{def:eq_classes}
       Two allocations are said to be \emph{equivalent} (noted $\sim$) if they produce the same \PB{partition} of the tips and are both parsimonious. Mathematically:
    \[
    \forall s_1, s_2 \in \Sr_K^P, ~ s_1 \sim s_2 \iff \phi(s_1)=\phi(s_2)
    \]
\PB{In other words, two allocations are equivalent if they produce the same tip {\sl coloring} up to a permutation of the colors.}
       Given $d\in(\Cr_{K+1})^n$ a coloring of the tips of $\Tr$ with $K+1$ colors, $\phi^{-1}(d)$ is the \PB{set} of equivalent parsimonious node coloring {that coincide with $d$ \PB{(up to a permutation of the colors)} on the tree leaves}.
      \end{definition}

      Several dynamic programming algorithms already exist to compute the minimal number of shifts required to produce a given tips coloring, and to find one associated parsimonious solution \citep[see][]{fitch1971, sankoff1975, felsenstein2004}. Here, we need to be a little more precise, as we want to {both} count and enumerate all possible equivalent node coloring{s} associated with a tip coloring. For the sake of brevity, we only present the algorithm that count{s} $\card{\phi^{-1}(d)}$, for $d\in(\Cr_K)^n$. This algorithm can be seen as a corollary {of} the enumeration algorithm (presented and proved in Appendix~\ref{supp:proofs_enumaration}) and an extension of Fitch algorithm where we keep track of both the cost of an optimal coloring and the number of such colorings. It has $O(K^2Ln)$ time complexity where $L$ is the maximal \PB{number of children} of the nodes of the tree.

      \begin{proposition}[Size of an equivalence class]\label{prop:size_eq_class}
      Let $d$ be a coloring of the tips, $d\in(\Cr_K)^n$, and let $i$ be a node of tree $\Tr$ with $L_i$ daughter nodes $(i_1, \cdots, i_{L_i})$, $L_i\geq 2$. Denote by $\Tr_i$ the sub-tree rooted at node $i$.\\
      For $k\in\Cr_K$, $S_i(k)$ is the \emph{cost} of starting from node $i$ with color $k$, i.e.\ the minimal number of shifts needed to get the coloring of the tips of $\Tr_i$ defined by $d$, when starting with node $i$ in color $k$.
      Denote by $T_i(k)$ the number {of} allocations on $\Tr_i$  that {achieve cost} $S_i(k)$.\par
      If $i$ is a tip ($m+1\leq i \leq m+n$), then,\\
       \begin{minipage}{0.5\textwidth}
 \[
  S_{i}(k) = \begin{cases}
           0 & \text{ if $d(i) = k$}\\
           +\infty & \text{ otherwise}
           \end{cases}
  \]
   \end{minipage}
   \begin{minipage}{0.5\textwidth}
 \[
  T_{i}(k) = \begin{cases}
           1 & \text{ if $d(i) = k$}\\
           0 & \text{ otherwise}
           \end{cases}
  \]
   \end{minipage}
   Otherwise, if $i$ is a node, for $1\leq l \leq L_i$, define the set of admissible colors for daughter $i_l$: 
 \[
 \Kr_k^l = \argmin_{p\in\Cr_K} \left\{ S_{i_l}(p) + \Ibb\{p \neq k\} \right\} 
 \]
 %the \PB{set} of states that minimize the number of shifts for $\Tr_{i_l}$, if starting with state $k$ for node $i$.
 As these sets are not empty, let $(p_1, \dotsc p_L) \in \Kr_k^1 \times \dotsc \times \Kr_k^L$. Then:
 \[
  S_i(k)  = \sum_{l = 1}^L S_{i_l}(p_l) + \Ibb\{p_l \neq k\}
  \quad\text{and}\quad
	T_i(k)  = \prod_{l = 1}^L \sum_{p_l \in \Kr_k^l} T_{i_l}(p_l)
	\]
  At the root, if $\Lr = \argmin_{k\in\Cr_K}S_1(k)$, then
  \(
  \card{\phi^{-1}(d)} = \sum_{k \in \Lr}T_1(k)
  \).
      \end{proposition}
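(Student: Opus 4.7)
The plan is to prove both formulas simultaneously by structural induction on the tree, proceeding from the tips up to the root (a post-order traversal). At each node $i$ and each color $k \in \Cr_K$, I would maintain the joint invariant that $S_i(k)$ equals the minimum number of shifts on branches strictly inside the subtree $\Tr_i$ yielding the restriction of $d$ to the tips of $\Tr_i$ when $i$ is assigned color $k$, and that $T_i(k)$ counts the node colorings of $\Tr_i$ achieving this minimum. This is an extension of Fitch's algorithm in which one tracks not only the optimal cost but also the multiplicity of optima.

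The base case is immediate: when $i$ is a tip, the subtree $\Tr_i$ has no branches, the only coloring assigns $k$ to $i$, and it matches $d(i)$ at zero cost if and only if $k=d(i)$, giving both stated formulas directly.

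For the inductive step at an internal node $i$ with children $i_1,\dots,i_{L_i}$, I would exploit the fact that any coloring $B$ of $\Tr_i$ with $B(i)=k$ is fully specified by (a) the tuple of colors $p_l=B(i_l)$ and (b) independent colorings of the subtrees $\Tr_{i_l}$, each with its root $i_l$ fixed to color $p_l$. The shift count of $B$ splits additively as
\[
\sum_{l=1}^{L_i}\Bigl(\Ibb\{p_l\neq k\} \;+\; \#\text{shifts of }B\text{ inside }\Tr_{i_l}\Bigr),
\]
because every branch of $\Tr_i$ is either some $b_{i_l}$ or lies entirely in exactly one subtree $\Tr_{i_l}$. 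Minimizing this decoupled expression first inside each subtree (which by induction gives $S_{i_l}(p_l)$) and then independently over each $p_l$ yields the stated recursion for $S_i(k)$ and identifies the admissible sets $\Kr_k^l$. For counting, given $k$ at $i$, an optimal coloring is obtained by choosing some $p_l\in\Kr_k^l$ for each child and then, independently, an optimal coloring of $\Tr_{i_l}$ starting at $p_l$; by induction the latter contributes $T_{i_l}(p_l)$ possibilities, and the sum-product principle, applied to independent children, gives the formula for $T_i(k)$. At the root, a parsimonious coloring of the whole tree is precisely one that starts from a color $k\in\Lr=\argmin_k S_1(k)$, so summing $T_1(k)$ over $k\in\Lr$ gives $\card{\phi^{-1}(d)}$.

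The main (mild) obstacle is stating the induction invariant tightly enough that the bookkeeping truly decouples across children. Two points need explicit verification: first, that the branches of $\Tr_i$ partition cleanly into the $L_i$ parent-to-child branches $b_{i_l}$ and the branches interior to each $\Tr_{i_l}$, so that shift counts are additive; second, that once the colors $p_l$ at the children are fixed, the optimal colorings of the disjoint subtrees are chosen independently, so that the counts are multiplicative. Both points rely only on the tree being acyclic and on the locality of the definition \emph{shift $=$ color change across a branch}; once they are in place, the recursion and the final aggregation over $\Lr$ follow by a routine induction, and the $O(K^2 L n)$ complexity is read off from the nested loops over nodes, children, and color pairs.
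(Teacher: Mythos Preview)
Your proposal is correct and matches the paper's approach: the paper explicitly frames this proposition as an extension of Fitch's algorithm that tracks both the optimal cost and the multiplicity of optima, and its proof (deferred to the appendix as a corollary of a more general enumeration algorithm) rests on the same post-order structural induction and child-wise decomposition you outline. The additive splitting of the shift count over the branches $b_{i_l}$ and the interiors of the $\Tr_{i_l}$, together with the sum-product counting over the admissible sets $\Kr_k^l$, are exactly the ingredients the paper uses.
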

      
    \paragraph{OU Practical Case} We can illustrate this notion on a simple example. We consider an OUsun on a random tree of unit \PB{height (total height $h=1$)}.
%, generated with function \printR{rcoal} from the \printR{R} package \citeR{ape}.
 We put three shifts on the tree, producing a given trait distribution.
 % showed Figure~\ref{fig:OUsun_ex_1}.
 Then, using proposition~\ref{prop:eq_color_shifts} and our enumeration algorithm, we can reconstruct the $5$ possible allocations of shifts that produce the exact same distribution at the tips. These solutions are shown in Figure~\ref{fig:equivalent_shifts_ex_1}. Note that the colors are not defined by the values of the optimal regime $\vect{\boldsymbol{\beta}}$, but by the mean values $\Espe{\vect{Y}}$ of the process at the tips. As a result, the groups shown in blue and red in the first solution have the same optimal value in this configuration, but not in any other. The second solution shown illustrates the fact that all the shifts values are inter-dependent, as changing the position of only one of them can have repercussions on all the others. Finally, the third solution shows that the timing of shifts matters: to have the same impact as an old shift, a recent one must have a much higher intensity (under constant selection strength such as in the OUsun). %This might suggest another rule of selection, preferring cases with lower shifts values. According to this rule, solutions $2$ or $4$ would be preferred.

% \begin{figure}[!ht]
% \begin{center}
% <<plot_simple_simu_2, echo=FALSE, results='asis', fig.width=5, fig.height=2.5, out.width='.7\\linewidth'>>=
% plot.data.process.actual(Y.state = datasim$m_Y,
%                          phylo = tree, 
%                          params = datasim$params,
%                          adj.root = 2,
%                          automatic_colors = TRUE)
% @
% \caption{An OUsun, with $\alpha = alpha$, $\gamma^2 = gamma$, $\beta_1 = beta_0$, and $K$ shifts. The colors represent the groups. The mean trait values at the tips is represented by the bars on the right, the unit being showed at the bottom.}\label{fig:OUsun_ex_1}
% \end{center}
% \end{figure}

\begin{figure}[!ht]

\includegraphics[width=\linewidth]{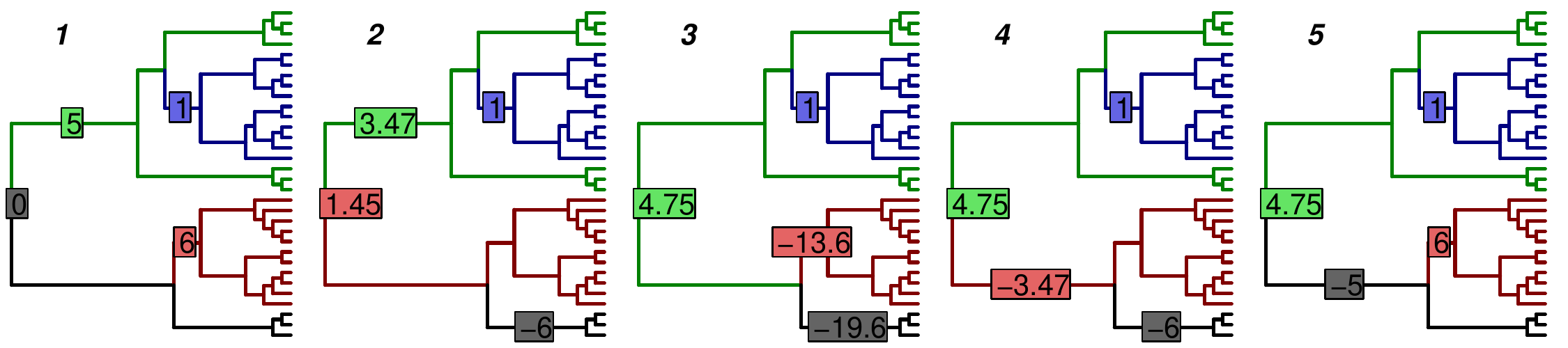} 

\caption{Five equivalent shift allocations that produce colorings that are adapted to an OUsun, with $\alpha = 3$ and $\gamma^2 = 0.1$. The box at the root represents the ancestral optimum $\beta_1$, and the boxes on the branches represent the positions and values of the shifts \PB{on the optimal value}.
%$\beta_1 = beta_0$.
While accounting for very different evolutionary scenarios, all allocations produce the same trait distribution at the tips.}\label{fig:equivalent_shifts_ex_1}
\end{figure}

\paragraph{Possible relaxation of the No-homoplasy assumption}
Note that \PB{the algorithms used for counting and enumerating the configurations of an equivalence class are valid even without} the no-homoplasy hypothesis. The no-homoplasy hypothesis is however crucial in the next Section \PB{to} establish a link between the number of shifts and the number of distinct tips colors.
%% In that case, we can define \emph{groups} of nodes, saying that two nodes are in the same group if there is no shift in the path on the tree going from one node to the other. Then the colors being still defined as the mean at the tips of the tree, two groups might share a same color, even in a parsimonious configuration.
  
  %%%%%%%%%%%%%%%%%%%%%%%%%%%%%%%%%%%%%
  \subsection{Complexity of a Collection of Models}\label{subsec:complexity_model}
  
    \paragraph{Number of different tips colors} As we make the inference \PB{on} the parameters with a {fixed} number of shifts $K$  (see Section~\ref{subsec:EM}), {we need a model selection procedure to choose $K$. This procedure depends on}  the \emph{complexity} of the collection of models that use $K$ shifts, defined as the number of \emph{distinct} models. To do that, we count the number of \emph{tree-compatible} {partitions} of the tips {into} $K+1$ {groups}, as defined in the next proposition:
    %We assume in the following that every new shift creates a new color, and hence that each group has its own color and equivalently its own mean at the tips (all the coloring we consider are assumed to be adapted to the processes studied). Under this hypothesis, we get the following proposition:
    \begin{proposition}\label{prop:Kshifts_K+1clusters}
    %An allocation of $K$ shifts on a tree is parsimonious if and only if it creates $K+1$ different clusters of the tips.
    Under the no homoplasy {assumption}, an allocation of $K$ shifts on a tree is parsimonious if and only if it creates exactly $K+1$ {tip {colors}}. \PB{The tip partition into $K+1$ groups associated with this coloring is said to be \emph{tree-compatible}. The set $\Dr_{K+1} \subset (\Cr_{K+1})^n$ of such  {partitions} {is} the image of $\Sr_K^P$ by the {map} $\phi$ defined in the previous section}.
    \end{proposition}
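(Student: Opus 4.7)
The plan is to prove the biconditional by establishing the following sharper statement: under the no-homoplasy assumption, the minimum number of shifts achieving a given tip partition into $p$ groups is exactly $p-1$. The biconditional falls out immediately, because a $K$-shift allocation producing $p$ tip colors is parsimonious if and only if $K = p-1$, i.e.\ $p = K+1$. The starting observation is that, under no-homoplasy, a $K$-shift allocation produces exactly $K+1$ distinct mean values on the nodes of the tree (the root value plus one brand-new value per shift), so the number $p$ of distinct tip mean values always satisfies $p \leq K+1$.

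For the lower bound ``$K \geq p-1$'', I would note that since $K$ shifts yield at most $K+1$ colors anywhere, any allocation producing $p$ distinct tip values must use at least $p-1$ shifts. For the matching upper bound (equivalently, the forward direction ``parsimonious $\Rightarrow p = K+1$''), I would argue by contrapositive: if a $K$-shift allocation produces $p < K+1$ tip colors, I exhibit another allocation with $K-1$ shifts inducing the same tip partition. Because there are $K+1$ colors on the tree but only $p < K+1$ among the tips, some color $c$ must appear only on internal nodes. If $c$ was created by a shift on some branch $b_i$ with parent color $p_0$, I delete that shift (set $\Delta_i = 0$), which recolors the region below $b_i$ from $c$ to $p_0$; I then absorb the change by rewriting the intensity of every downstream shift originally rooted in the $c$-region so that it still produces its original child mean value. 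Since no tip had color $c$, no tip mean value changes. If instead $c$ is the root color, I pick any shift $s$ whose parent lies in the root region (at least one must exist, else $c$ would reach a tip), re-parameterize $\mu$ to the mean value below $s$, delete $s$, and adjust the remaining shifts rooted in the former $c$-region analogously. Either way, the shift count drops by one and the tip partition is preserved, contradicting parsimony.

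The final sentence of the proposition, $\Dr_{K+1} = \phi(\Sr_K^P)$, then follows from the two directions: $\phi(\Sr_K^P) \subseteq \Dr_{K+1}$ is the forward direction, while $\Dr_{K+1} \subseteq \phi(\Sr_K^P)$ holds because a tree-compatible $(K+1)$-group partition is by definition realized by some allocation, whose minimum-shift realization has $K$ shifts by the lower bound and is therefore parsimonious. The main obstacle I anticipate is the bookkeeping in the shift-removal step, especially for the root-color case: one must verify that re-parameterizing $\mu$ and adjusting the downstream shift intensities genuinely preserves all tip mean values, and that the remaining $K-1$ shifts still create $K$ pairwise-distinct mean values so that no-homoplasy is not accidentally broken in the reduced allocation. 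Once that check is in place, the rest of the argument is a clean counting exercise.
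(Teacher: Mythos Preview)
Your proposal is correct and follows essentially the same route as the paper's proof: both argue the forward direction by contrapositive, observing that if fewer than $K+1$ tip colors appear then some color (equivalently, some shift) never reaches a tip, and that deleting that shift while absorbing its value into the immediately downstream shifts preserves the tip partition with one fewer shift. Your write-up is actually more careful than the paper's on two points the paper glosses over---the separate treatment of the root-color case and the check that the reduced allocation still satisfies no-homoplasy---so the bookkeeping concerns you flag are real but easily resolved (the surviving mean values are just the original $K+1$ values with $c$ removed, hence still pairwise distinct).
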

    
        \begin{proof}[Proof of Proposition~\ref{prop:Kshifts_K+1clusters}]
    First, note that  $K$ shifts create {at most} $K+1$ {colors}. {If each shift} produce{s} a new {tip mean value} (no homoplasy), the only way to create {$K$ or less} {colors} is to \enquote{forget} one of the shifts, i.e.\ to put shifts on every descendant of the branch where it happens. Such an allocation is not parsimonious, as we could just add the value of the forgotten shift to all its descendant to get the same {coloring} of the tips with one less shift. So a parsimonious allocation cannot create less than $K+1$ {colors}, and hence creates exactly $K+1$ {colors}.\par
    Reciprocally, if an allocation with $K$ shifts that produces $p$ groups is not parsimonious, then we can find another parsimonious one that produces the same $p$ groups with $p-1$ shifts, with $p-1 < K$, i.e.\ $p < K+1$. So, by contraposition, if the allocation produces $K+1$ groups, then it is parsimonious.
    \end{proof}
      
%     We denote by $\Sr_K^P$ the \PB{set} of parsimonious allocations of $K$ shifts on the $(m+n-1)$ branches of the tree (not counting the root branch), and by $\Dr_{K+1}$ the \PB{set} of tree-compatible clustering of the tips in $K+1$ groups. From proposition~\ref{prop:Kshifts_K+1clusters}, $\Dr_{K+1}$ can be defined as the image of $\Sr_K^P$ by the application $\phi$ (that can be defined as a restriction): \( \phi: \Sr_K^P \to \Dr_{K+1}\). This application can be used to define the equivalence relation of the equivalence classes defined in~\ref{def:eq_classes}: 
%     \[
%     \forall s_1, s_2 \in \Sr_K^P, ~ s_1 \sim s_2 \iff \phi(s_1)=\phi(s_2)
%     \]
%     And we get $\Ar_{\Tr}(d) = \phi^{-}(d), \forall d \in \Dr_{K+1}$.
    %The proof of this proposition is given in section~\ref{supp:proofs_complexity} of the~\ref{supp}.
    Using the equivalence relation defined in Definition~\ref{def:eq_classes}, we can formally take the quotient set of $\Sr_K^P$ by the relation $\sim$ to get the set of parsimonious allocations of $K$ shifts on the $m+n-1$ branches of the tree that are identifiable: \(\Sr_K^{PI} = \Sr_K^P / \sim \). In other words, 
%    we construct $\Sr_K^{PI}$ by taking one representative by equivalence class. 
\PB{the set $\Sr_K^{PI}$ is constituted of one representative of each equivalence class.}
Under the no homoplasy {assumption}, there is thus a bijection between identifiable parsimonious allocations of $K$ shifts and tree-compatible \PB{partitions} of the tips in $K+1$ groups: $\Sr_K^{PI} \bij \Dr_{K+1}$.\par
    The number $N^{(\Tr)}_{K+1} = \card{\Dr_{K+1}}$ is the complexity of the class of models with $K$ shifts {defined as} the number of distinct identifiable parsimonious possible configurations one can get with $K$ shifts on the tree. To compute $N^{(\Tr)}_K$, we will need $M^{(\Tr)}_K$ the number of \emph{marked} tree-compatible partitions in $K$ groups. \NR{These are composed of all the tree-compatible partitions where one group, among those that could be in the same state as the root, is distinguished with a mark} (see example~\ref{ex:marqued_unmarqued} below).
    %{\emph{i.e.}} {where} one group{, the position of which matters,} {is} marked as the \enquote{ancestral state} (see example~\ref{ex:marqued_unmarqued} below).% in Section~\ref{supp:proofs_complexity} of the~\ref{supp}).
    
\begin{example}[Difference between $N^{(\Tr)}_2$ and $M^{(\Tr)}_2$]\label{ex:marqued_unmarqued}\qquad

        \begin{minipage}{0.21\textwidth}
  \begin{figure}[H]
    \includegraphics[width=\textwidth, height = 3.2cm]{../../../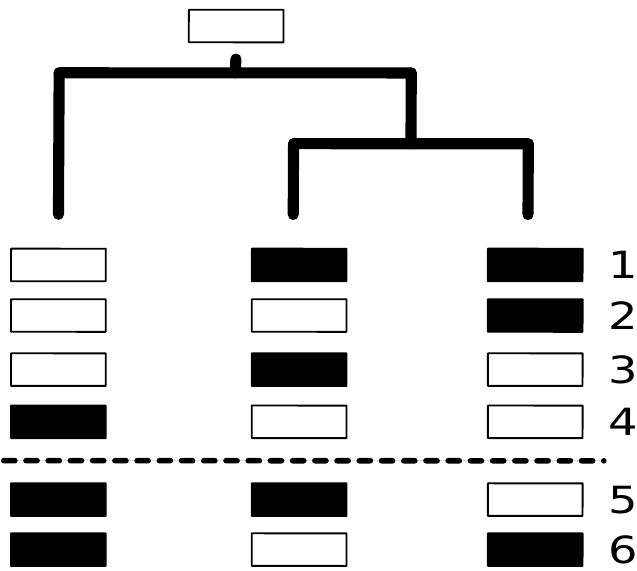}
	  \caption*{\small Partitions in 2 groups.}
	\end{figure}
    \end{minipage} \hfill
    \begin{minipage}{0.75\textwidth}
      \begin{itemize}
      \itemsep0em
	\item \NR{If we consider only unmarked partitions, then colorings 1, 2 and 3 induce the same partitions as, respectively, colorings 4, 5 and 6, and $N^{(\Tr)}_2=3$.}
	\item \NR{For marked partitions, fix the root state to an arbitrary color, for instance white, and consider the white group as marked. Then colorings 5 and 6 are not tree-compatible (they require two shifts). And although they induce the same partition, colorings 1 and 4 correspond to different marked partitions: each marks a different group of leaves. Therefore $M^{(\Tr)}_2=4$.}
    \end{itemize}
    \end{minipage}
    
  \end{example}
    
    \begin{proposition}[Computation of the Number of Equivalent Classes]\label{prop:number_eq_classes}
    Let $i$ be a node of tree $\Tr$, and $K\in \N^*$.\par
    If $i$ is a tip, then $N^{(\Tr_i)}_K = M^{(\Tr_i)}_K = \Ibb\{K = 1\}$.\par
    Else, if $i$ is a node with ${L_i}$ daughter nodes $(i_1, \cdots, i_{L_i})$, ${L_i}\geq 2$, then:
    \begin{equation}\label{eq:rec_num_eq_classes}
    \left\{
    \begin{aligned}
      N^{(\Tr_i)}_K & = \sum_{\substack{I\subset \intervalleentier{1}{L_i} \\ \card{I} \geq 2}}\quad \sum_{\substack{k_1+\dotsb + k_{L_i}=K+\card{I}-1 \\ k_1,\dotsc,k_{L_i} \geq 1}} \quad \prod_{l\in I} M^{(\Tr_{i_l})}_{k_l} \prod_{l\notin I} N^{(\Tr_{i_l})}_{k_l} + \sum_{\substack{k_1+\dotsb + k_{L_i}=K \\ k_1,\dotsc,k_{L_i} \geq 1}} \quad \prod_{l=1}^{L_i} N^{(\Tr_{i_l})}_{k_l} \\
      M^{(\Tr_i)}_K & = \sum_{\substack{I\subset \intervalleentier{1}{L_i} \\ \card{I} \geq 1}}\quad \sum_{\substack{k_1+\dotsb + k_{L_i}=K+\card{I}-1 \\ k_1,\dotsc,k_{L_i} \geq 1}} \quad \prod_{l\in I} M^{(\Tr_{i_l})}_{k_l} \prod_{l\notin I} N^{(\Tr_{i_l})}_{k_l}
    \end{aligned}
      \right.
    \end{equation}
    In the binary case, this relation becomes, if $i$ has two daughters $i_\ell$ and $i_r$:
    \begin{equation}\label{eq:rec_num_eq_classes_bin}
    \left\{
    \begin{aligned}
      N^{(\Tr_i)}_K & = \sum_{\substack{k_1+k_2 = K \\ k_1,k_2 \geq 1}} N^{(\Tr_{i_\ell})}_{k_1} N^{(\Tr_{i_r})}_{k_2} + \sum_{\substack{k_1+k_2 = K + 1\\ k_1,k_2 \geq 1}} M^{(\Tr_{i_\ell})}_{k_1} M^{(\Tr_{i_r})}_{k_2} \\
      M^{(\Tr_i)}_K & = \sum_{\substack{k_1+k_2 = K \\ k_1,k_2 \geq 1}} M^{(\Tr_{i_\ell})}_{k_1} N^{(\Tr_{i_r})}_{k_2} + N^{(\Tr_{i_\ell})}_{k_1} M^{(\Tr_{i_r})}_{k_2} + \sum_{\substack{k_1+k_2 = K + 1\\ k_1,k_2 \geq 1}} M^{(\Tr_{i_\ell})}_{k_1} M^{(\Tr_{i_r})}_{k_2}
    \end{aligned}
      \right.
    \end{equation}
    \end{proposition}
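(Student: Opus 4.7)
The plan is to prove both formulas by simultaneous strong induction on the height of $\Tr_i$. The base case $i$ a tip is immediate: only the trivial one-group partition exists, and its unique group is a root-state candidate, so $N_1=M_1=1$ and all other values vanish. For the inductive step at an internal node $i$ with children $i_1,\dots,i_{L_i}$, $L_i\geq 2$, the argument rests on a single consequence of the no-homoplasy assumption: if a group $G$ of a tree-compatible partition $\pi$ of $\Tr_i$ contains tips from two distinct subtrees $\Tr_{i_l}$ and $\Tr_{i_{l'}}$, then the unique shift creating $G$'s color must lie on or above $i$, so $G$ coincides with the root-state (``marked'') group of $\Tr_i$. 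Hence at most one group of $\pi$ can span several child subtrees, and if one does it is the root-state group. Writing $\pi_l$ for the restriction of $\pi$ to the tips of $\Tr_{i_l}$ and $k_l=|\pi_l|$, each $\pi_l$ is itself tree-compatible on $\Tr_{i_l}$.

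To compute $N_K^{(\Tr_i)}$ I would let $I\subseteq\intervalleentier{1}{L_i}$ index the subtrees whose tips meet the root-state group of $\Tr_i$. When $|I|\geq 2$, $\pi$ is built by gluing the sub-partitions $(\pi_l)$ (each with its root-state group distinguished when $l\in I$) and merging these $|I|$ distinguished groups into one; the group count becomes $\sum_l k_l-(|I|-1)=K$, hence $\sum_l k_l=K+|I|-1$, with contribution $\prod_{l\in I}M_{k_l}\prod_{l\notin I}N_{k_l}$. When no group of $\pi$ spans several subtrees, $\pi$ is the disjoint union of the $\pi_l$ with $\sum_l k_l=K$: the correspondence $(\pi_l)_l\mapsto\pi$ is a bijection, so the contribution is simply $\prod_l N_{k_l}$, with no marking needed since only the partition (not its root-state) is tracked. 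Summing over $I$ recovers the first formula; specializing to $L_i=2$ leaves only $|I|=0$ (the $NN$-sum with $\sum k_l=K$) and $|I|=2$ (the $MM$-sum with $\sum k_l=K+1$), yielding the binary recursion.

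For $M_K^{(\Tr_i)}$ I would count pairs $(\pi,G_0)$ where $G_0\in\pi$ may serve as the root-state group of $\pi$ in some parsimonious coloring. If $G_0$ spans $|I|\geq 2$ subtrees, the key lemma forces it to be the unique such candidate of $\pi$, so this contribution duplicates the $|I|\geq 2$ part of $N_K$. If instead $G_0$ lies inside a single subtree $\Tr_{i_{l_0}}$, then $G_0$ qualifies for $\pi$ on $\Tr_i$ iff it qualifies for $\pi_{l_0}$ on $\Tr_{i_{l_0}}$: the forward direction is by restriction, and the backward one by gluing a parsimonious coloring of $\pi_{l_0}$ rooted at $G_0$'s color to parsimonious colorings of each $\pi_l$ for $l\neq l_0$, together with a shift on each branch $b_{i_l}$, $l\neq l_0$, for a total of exactly $K-1$ shifts. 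This $|I|=1$ case contributes $M_{k_{l_0}}\prod_{l\neq l_0}N_{k_l}$ with $\sum_l k_l=K$; combined with $|I|\geq 2$ it gives the stated sum over all $|I|\geq 1$, and the binary formula follows by specialization to $L_i=2$.

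The step I expect to be most delicate is justifying that the restriction $\pi_l$ of a tree-compatible $\pi$ remains tree-compatible on $\Tr_{i_l}$. A parsimonious cut-set for $\pi$ on $\Tr_i$, when intersected with $E(\Tr_{i_l})$, may leave tipless components inside $\Tr_{i_l}$, so it need not itself be parsimonious for $\pi_l$. I would repair this by iteratively removing any cut adjacent to a tipless component of $\Tr_{i_l}$: each such removal merges the tipless region into a neighboring component without changing $\pi_l$, and yields a cut-set of size $k_l-1$ achieving $\pi_l$, establishing its tree-compatibility. With that in hand, the remaining verifications are bookkeeping: the map $(I,(\pi_l),\text{marks})\mapsto\pi$ is bijective because $I$ is recovered from $\pi$ as the set of subtrees meeting the spanning group and the marks as its subtree-intersections, so every (marked) partition is counted exactly once.
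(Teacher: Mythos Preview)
Your proof is correct and follows the same decomposition as the paper: split according to whether some group (necessarily the root-state one, by no-homoplasy) is shared across child subtrees, and count each case via the induced sub-partitions. The paper only writes out the binary case and declares the general one ``a natural extension'', whereas you treat arbitrary $L_i$ directly; you also explicitly handle the tree-compatibility of the restrictions $\pi_l$ (your tipless-component repair), a technical point the paper leaves implicit.
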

    
        \begin{proof}
        We will prove this proposition in the binary case, the general case being a natural extension of it.
    If $\Tr$ is a binary tree with $\Tr_{\ell}$ and $\Tr_{r}$ as left and right sub-trees, {one faces two situations when partitioning the tips in $K$ groups}:
    \begin{itemize}
    \item The left and right sub-trees have no group in common. Then, the number of groups in $\Tr$ is equal to the number of groups in its two sub-trees, and there are $\sum_{k_1+k_2=K} N^{(\Tr_{\ell})}_{k_1} N^{(\Tr_{r})}_{k_2}$ such partitions. This is the first term of the equation on $N^{(\Tr)}_K$ in~\ref{eq:rec_num_eq_classes_bin}.
    \item The left and right sub-trees have at least one group in common. Then, from the no homoplasy {assumption}, they have exactly one group in common: the ancestral state of the root. Suppose that this ancestral state is marked. Then it must be present in the two sub-trees, and there are  $\sum_{k_1+k_2=K+1} M^{(\Tr_{\ell})}_{k_1} M^{(\Tr_{r})}_{k_2}$ such partitions. This ends the proof of the formula on $N^{(\Tr)}_K$.
    \end{itemize}
    
    To get the formula on $M^{(\Tr)}_K$, we use the same kind of arguments. The second part of the formula is the same {as} the one for $N^{(\Tr)}_K$, and the first part corresponds to trees for which the marked partition is present in only one of the two sub-trees.
    \end{proof}
      The complexity of the algorithm described above is $O(2^L(K+L)^LLn)$.
   Note that $N^{(\Tr)}_K$ depends on the topology of the tree $\Tr$ in general. However, if the tree is binary, {a closed form solution of the recurrence relation~\ref{eq:rec_num_eq_classes}, which does not depend on the topology, exists}.
    
    \begin{corollary}[Closed Formula Binary Tree{s}]\label{expression_binary}
      For a rooted binary tree with $n$ tips, we have:
      \[
	N^{(\Tr)}_{K+1} = N^{(n)}_{K+1} = \card{\Sr_K^{PI}} = \binom{2n-2-K}{K}  \text{ and } M^{(\Tr)}_K = M^{(n)}_K = \binom{2n-K}{K-1}
      \]
    \end{corollary}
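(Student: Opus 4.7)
The plan is to proceed by induction on the number of tips $n$, establishing both closed forms simultaneously. For the base case $n = 1$ (a single leaf), both formulas correctly return $1$ at $K = 1$ and $0$ elsewhere, matching the initialization $N_K^{(\Tr_i)} = M_K^{(\Tr_i)} = \Ibb\{K = 1\}$.

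For the inductive step, consider $\Tr$ with $n \geq 2$ tips, decomposed into left and right subtrees $\Tr_\ell, \Tr_r$ with $n_\ell + n_r = n$. I would package the unknowns into generating functions $\mathcal{N}_n(x) = \sum_{K \geq 1} N^{(n)}_K x^K$ and $\mathcal{M}_n(x) = \sum_{K \geq 1} M^{(n)}_K x^K$. Since $N^{(n)}_0 = M^{(n)}_0 = 0$, the binary recurrence~\ref{eq:rec_num_eq_classes_bin} rewrites as
\[
x \mathcal{N}_n = x\, \mathcal{N}_{n_\ell} \mathcal{N}_{n_r} + \mathcal{M}_{n_\ell} \mathcal{M}_{n_r}, \qquad x \mathcal{M}_n = x\bigl(\mathcal{M}_{n_\ell} \mathcal{N}_{n_r} + \mathcal{N}_{n_\ell} \mathcal{M}_{n_r}\bigr) + \mathcal{M}_{n_\ell} \mathcal{M}_{n_r},
\]
where the extra $x$ factors on the left-hand sides account for the $K+1$ in the summation bound of the second sums in~\ref{eq:rec_num_eq_classes_bin}.

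The two claimed closed forms translate, via the substitution $j = K - 1$, into $\mathcal{N}_n(x) = x\, f_{2n-2}(x)$ and $\mathcal{M}_n(x) = x\, f_{2n-1}(x)$, where $f_p(x) = \sum_{j \geq 0} \binom{p-j}{j} x^j$ is the $p$-th generalized Fibonacci polynomial, satisfying $f_p = f_{p-1} + x f_{p-2}$ with $f_0 = f_1 = 1$. Plugging this Ansatz into the two displayed recurrences, simplifying by $x^2$, and setting $p = 2 n_\ell - 2$, $q = 2 n_r - 2$, the inductive step reduces to the two polynomial identities
\begin{align*}
f_{p+q+2} &= x\, f_p f_q + f_{p+1} f_{q+1}, \\
f_{p+q+3} &= x\bigl(f_{p+1} f_q + f_p f_{q+1}\bigr) + f_{p+1} f_{q+1}.
\end{align*}

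The second identity follows from the first by the substitution $q \mapsto q + 1$ together with the expansion $f_{q+2} = f_{q+1} + x f_q$. The first identity is the only genuine piece of work and I would prove it by induction on $p$ using only the defining recurrence: the base case $p = 0$ is just $f_{q+2} = f_{q+1} + x f_q$, and the step from $\{p-1, p\}$ to $p+1$ combines the two induction hypotheses with $f_{p+q+3} = f_{p+q+2} + x f_{p+q+1}$ and $f_{p+1} = f_p + x f_{p-1}$, collecting coefficients of $f_q$ and $f_{q+1}$. The main obstacle is thus not conceptual but algebraic bookkeeping: correctly tracking the shift by one coming from the second sums in~\ref{eq:rec_num_eq_classes_bin} so that the ansatz lands on this clean Fibonacci-polynomial identity. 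Once that identity is isolated, both closed forms, and therefore the topology-independence of $N^{(\Tr)}_{K+1}$ and $M^{(\Tr)}_K$ on binary trees, drop out of the same induction on $n$.
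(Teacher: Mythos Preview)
Your argument is correct and follows the same outer scaffold as the paper---strong induction on the number of tips via the binary recurrence~\eqref{eq:rec_num_eq_classes_bin}---but packages the core algebraic step differently. The paper works directly with the binomial coefficients and reduces the induction step to a Vandermonde-type convolution identity (proved in Appendix~\ref{supp:vandermonde}); you instead pass to generating functions, recognize the Fibonacci polynomials $f_p(x)=\sum_{j\ge 0}\binom{p-j}{j}x^j$, and reduce everything to the addition formula $f_{p+q+2}=x f_p f_q + f_{p+1} f_{q+1}$. Expanding this formula coefficientwise recovers exactly the Vandermonde-like identity the paper isolates, so the two routes are equivalent at the level of content; your repackaging buys a cleaner bookkeeping of the index shift coming from the $K+1$ in the second sum and a connection to a classical identity, while the paper's version stays closer to the combinatorics and avoids introducing auxiliary objects. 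One small gap: since your induction on $p$ uses both $p-1$ and $p$ to reach $p+1$, you need a second base case $p=1$ alongside $p=0$; it follows at once from $f_2=1+x$ and the defining recurrence.
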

    
    The demonstration of this formula is not straightforward, and is based on a Vandermonde-like equality,  detailed in Appendix~\ref{supp:vandermonde}. The formula is then obtained using a strong induction on the number of tips of the tree.
    \PB{
    \begin{remark} \label{rk:sqrt_n} Note that, when $K$ is large compared to $\sqrt{n}$, the average number of configurations per equivalence class goes to infinity. This can be checked by comparing the total number of configurations $\binom{2n-1}{K-1}$ with the total number of classes $\binom{2n-K-1}{K-1}$. As a consequence, we only consider models for which $K < \sqrt{n}$ in the remainder.
    \end{remark}
    \begin{remark} This formula was already obtained in a different context in \citet{steel1992} (Proposition~1) and, with a slightly different formulation, in \citet[][Proposition~4.1.4]{semple2003}. In these works, the authors are interested in counting the \enquote{$r$-states convex characters on a binary tree}. Under the no-homoplasy assumption, this number can be shown to be equal to $\card{\Sr_{r-1}^{PI}}$.
    \end{remark}
    }

%     To prove this corollary, we use the following lemma, with $R=K-1$, $p=2n_{\ell}-1$, $q=2n_r-1$, with $n_\ell$ and $n_r$ the number of tips of the left and right sub-trees of $\Tr$:
%     \begin{lemma}
%       Let $(p,q) \in  \N^2$ and $R \in \N$. With the standard convention that
%       $\binom{p}{r} = 0$ if $p < r$, 
%       \begin{equation*}
% 	\binom{p+q-R}{R} = \sum_{r = 0}^R \binom{p-r}{r} \binom{q - R + r}{R
% 	  -r}  + \sum_{r = 0}^{R-1} \binom{(p-1)-r}{r} \binom{(q-1) - (R-1) + r}{(R-1)
% 	  -r}
%       \end{equation*}
%       and
%       \begin{equation*}
% 	\begin{split}
% 	  \binom{p+q+1-R}{R} = & \sum_{r = 0}^R \binom{p-r}{r} \binom{q - R + r}{R-r} \\
% 	  & + \sum_{r = 0}^{R-1} \binom{(p-1)-r}{r} \binom{q - (R-1) + r}{(R-1)
% 	    -r} + \binom{p-r}{r} \binom{(q-1) - (R-1) + r}{(R-1) -r}
% 	\end{split}
%       \end{equation*}
%     \end{lemma}
%     This lemma is proven in Appendix~\ref{appendix:vandermonde}.
%     Then, the property \enquote{$N^{(\Tr)}_K$ and $M^{(\Tr)}_K$ do not depend on the topology of $\Tr$ and have the expressions given in proposition~\ref{expression_binary}} can be proven with a strong induction, as it is obviously true for trees with one or two tips.

\subsection{Another Characterization of Parsimony}
The following proposition gives an alternative definition of parsimony under the no-homoplasy hypothesis using the linear formulation of the problem. It will be used for model selection in Section~\ref{sec:model_selection}. \NR{Its technical proof is postponed to Appendix~\ref{supp:proofs_identifiability}.}

    \begin{proposition}[Equivalence between parsimony and independence]\label{prop:parsimony_linear_independence}
     Let $\vect{m_Y}$ be a given mean vector, $\vect{m_Y}\in\R^{n}$, and $\vect{\Delta}$ a vector of shifts such that $\matr{T}\vect{\Delta} = \vect{m_Y}$, with $\matr{T}$ the tree matrix defined in Section~\ref{subsec:linear_model}. Under the no homoplasy {assumption}, the vector of shifts $\vect{\Delta}$ is parsimonious if and only if the corresponding column-vectors of the tree matrix $(T_i)_{i\in\Supp(\vect{\Delta})}$ are linearly independent.
      \end{proposition}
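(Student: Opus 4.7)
I would prove both implications by contraposition, using the kernel basis of $\matr{T}$ from Proposition~\ref{prop:kernel_of_T} for the forward direction and a row-rank count, combined with Proposition~\ref{prop:Kshifts_K+1clusters}, for the reverse direction. For \emph{parsimony $\Rightarrow$ independence}, suppose $(T_i)_{i \in \Supp(\vect{\Delta})}$ are linearly dependent. Then there exists a nonzero $\vect{v}$ with $\Supp(\vect{v}) \subseteq \Supp(\vect{\Delta})$ and $\matr{T}\vect{v} = \vect{0}$. For $\vect{\Delta}(t) \eqdef \vect{\Delta} - t\vect{v}$, one has $\matr{T}\vect{\Delta}(t) = \vect{m_Y}$ for every $t \in \R$; choosing $t^\star = \Delta_{i^\star}/v_{i^\star}$ for some $i^\star \in \Supp(\vect{v})$ kills coordinate $i^\star$, and the inclusion $\Supp(\vect{v}) \subseteq \Supp(\vect{\Delta})$ prevents any new nonzero coordinate from appearing outside $\Supp(\vect{\Delta})$. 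Thus $|\Supp(\vect{\Delta}(t^\star))| < |\Supp(\vect{\Delta})|$, producing a strictly sparser shift allocation with the same $\vect{m_Y}$, contradicting parsimony of $\vect{\Delta}$.

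For \emph{independence $\Rightarrow$ parsimony}, I would argue by a row-rank count. Let $p$ be the number of distinct tip means of $\vect{m_Y}$; by Proposition~\ref{prop:Kshifts_K+1clusters}, the minimum number of nonzero entries in any allocation giving $\vect{m_Y}$ is exactly $p$. Suppose $\vect{\Delta}$ is not parsimonious, so $|\Supp(\vect{\Delta})| > p$. For each tip $t$, set $A(t) \eqdef \Supp(\vect{\Delta}) \cap \Par(t)$; the identity $(\matr{T}\vect{\Delta})_t = \sum_{i \in A(t)} \Delta_i$ shows that tips with identical $A(t)$ have identical means, and under the no-homoplasy hypothesis the converse also holds: tips whose root-to-leaf paths carry different sets of shifts (equivalently, different $A(t)$) must have different means. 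Hence the number of distinct $A(t)$ over the tips is exactly $p$. Since the distinct rows of $\matr{T}|_{\Supp(\vect{\Delta})}$ are precisely the indicator vectors of these $p$ sets, the row rank, and hence the column rank, of $\matr{T}|_{\Supp(\vect{\Delta})}$ is at most $p < |\Supp(\vect{\Delta})|$, so $(T_i)_{i \in \Supp(\vect{\Delta})}$ are linearly dependent.

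The delicate point is the two-way correspondence invoked in the reverse direction: under no homoplasy, two tips share a mean if and only if they share the same ancestor-in-support set. This equivalence is what bounds the rank of $\matr{T}|_{\Supp(\vect{\Delta})}$ by the combinatorial cluster count $p$ from Proposition~\ref{prop:Kshifts_K+1clusters} rather than by the generally larger $|\Supp(\vect{\Delta})|$. Once it is established, the rest of the proof reduces to a short rank comparison combined with the kernel-reduction argument of the first direction.
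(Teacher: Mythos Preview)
Your argument is sound on both directions, and the row-rank idea for the reverse implication is clean. The paper defers its proof of this proposition to Appendix~\ref{supp:proofs_identifiability}, which is an external document not contained in the main source, so a line-by-line comparison is not possible here. That said, your forward direction is the natural kernel-reduction argument the paper's machinery (Proposition~\ref{prop:kernel_of_T}) is set up for, and your reverse direction---counting distinct rows of $\matr{T}|_{\Supp(\vect{\Delta})}$ via the sets $A(t)$ and bounding the rank by the number $p$ of tip colors---is a tidy alternative to the more combinatorial route suggested by the proof of Proposition~\ref{prop:Kshifts_K+1clusters} (where non-parsimony is characterised by a ``forgotten'' shift whose children all carry shifts, giving $T_b=\sum_c T_c$ directly).

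One point deserves explicit care: the equivalence as stated relies on the root index always being part of the support, i.e.\ $1\in\Supp(\vect{\Delta})$. The paper uses this convention explicitly in Section~\ref{sec:model_selection} (``the first component, the root, is always included in the model''), and both of your directions tacitly need it. In the reverse direction you use $|\Supp(\vect{\Delta})|=K+1$ to get the strict inequality $p<|\Supp(\vect{\Delta})|$; without the root this can fail (take a cherry with $\mu=0$ and a shift on each pendant edge: two shifts, two tip colors, not parsimonious, yet $T_2,T_3$ are independent). In the forward direction, if parsimony is read as ``minimum number of color changes'' (shifts on non-root branches), you should pick $i^\star\geq 2$; this is always possible since $T_1$ is the all-ones vector and cannot lie in $\ker\matr{T}$ alone, so $\Supp(\vect{v})\not\subseteq\{1\}$. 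With that convention made explicit, your proof goes through without change.
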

      
      %     \begin{proof}[Proof of Proposition~\ref{prop:parsimony_linear_independence}]
      % By contraposition, let's first assume that the vector-columns $(T_i)_{i\in\Supp(\vect{\Delta})}$ are linearly dependent, and prove that $\vect{\Delta}$ is not parsimonious. This means that we can find a vector $\vect{E}$, $\vect{E}\in\R^{m+n}$, such that $\Supp(\vect{E}) \subset \Supp(\vect{\Delta})$, and $\matr{T}\vect{E}=0$. We can hence find $j\in\Supp(\vect{\Delta})$, $j>1$, such that $E_j\neq 0$. Then if $\lambda = -\Delta_j/E_j$, the vector $\vect{\Delta}'=\vect{\Delta} + \lambda \vect{E}$ is a vector of shifts on the tree with one less non-zero coordinate than $\vect{\Delta}$ such that $\matr{T}\vect{\Delta}'=\vect{m_Y}$. Hence, $\vect{\Delta}$ is not parsimonious.\par
      % Reciprocally, by contraposition, assume that $\vect{\Delta}$ is not parsimonious. Then by proposition~\ref{prop:Kshifts_K+1clusters}, it produces $p$ groups, with $p\leq K$. Hence the application associated with $(T_i)_{i\in\Supp(\vect{\Delta})}$ goes from a space of dimension $K+1$ to a space of dimension $p\leq K$, and hence is not injective, and the family $(T_i)_{i\in\Supp(\vect{\Delta})}$ is not independent.
      % \end{proof}

%%%%%%%%%%%%%%%%%%%%%%%%%%%%%%%%%%%%%%%%%%%%%%%%%%%%%%%%%%%%%%%%%%%%%
\section{Statistical Inference} \label{sec:inference}
  
  %%%%%%%%%%%%%%%%%%%%%%%%%%%%%%%%%%%%%
  \subsection{Expectation Maximization}\label{subsec:EM}
  
    \paragraph{Principle} As shown in Section~\ref{subsec:latent_model}, both BM and OUsun models can be seen as incomplete data models. The Expectation Maximization \PB{algorithm} \citep[EM,][]{dempster1977} is a widely used algorithm for likelihood maximization of these kinds of models. It is based on the decomposition:
    \(
    \log p_{\vvect{\theta}}(\vect{Y}) = \Esp_{\vvect{\theta}}[\sachant{\log p_{\vvect{\theta}}(\vect{Z},\vect{Y})}{\vect{Y}}] - \Esp_{\vvect{\theta}}[\sachant{\log p_{\vvect{\theta}}(\sachant{\vect{Z}}{\vect{Y}})}{\vect{Y}}]
    \).
    Given an {estimate} $\vvect{\theta}^{(h)}$ of the parameters, we need to compute some moments of $p_{\vvect{\theta}^{(h)}}(\sachant{\vect{Z}}{\vect{Y}})$ (E step), and then find a new {estimate} $\vvect{\theta}^{(h+1)} = \argmax_{\vvect{\theta}}\Esp_{\vvect{\theta}^{(h)}}[\sachant{\log p_{\vvect{\theta}}(\vect{Z},\vect{Y})}{\vect{Y}}] $ (M step). The parameters are given for the BM and OUsun in subsection~\ref{subsec:probabilistic_model}. We assume here that the number of shifts $K$ is fixed.\par
    We only {provide} the main steps of the EM. {Additional} details can be found in Appendix~\ref{supp:EM}.
    
    \paragraph{E step} As $\vect{X}$ is Gaussian, the law of the hidden variables $\vect{Z}$ knowing the observed variables $\vect{Y}$ is entirely defined by its expectation and variance-covariance matrix, and can be computed using classical formulas for Gaussian conditioning. {The needed moments of $\sachant{\vect{Z}}{\vect{Y}}$ can also be computed using a procedure that is linear in the number of tips} (called \enquote{Upward-downward}) that takes advantage of the tree structure and bypasses inversion of the variance-covariance matrix \citep[see][for a similar algorithm]{lartillot2014}. %See Section~\ref{supp:E_step} of the~\ref{supp}.
    
% More precisely, given a set of parameters $\theta^{(h)}$, we have:
%      \[
%       X = (Z, Y) \sim \Normal \left(  m^{(h)} = \left(\begin{array}{c} m^{(h)}_Z \\ m^{(h)}_Y \end{array} \right) ~,~ \Sigma^{(h)} = \left(\begin{array}{cc} \Sigma^{(h)}_{ZZ} & \Sigma^{(h)}_{YZ} \\ \Sigma^{(h)}_{ZY} & \Sigma^{(h)}_{YY} \end{array} \right)\right)
%     \]
%     hence $\sachant{Z}{Y}$ is Gaussian, with:
%     \(
%     m^{(h)}_{\sachant{Z}{Y}} = m^{(h)}_Z + \Sigma^{(h)}_{ZY} (\Sigma^{(h)}_{YY})^{-1} (Y - m^{(h)}_Y)
%     \)
%      and 
%      \(
%     \Sigma_{\sachant{Z}{Y}} = \Sigma^{(h)}_{ZZ} - \Sigma^{(h)}_{ZY} (\Sigma^{(h)}_{YY})^{-1} \Sigma^{(h)}_{YZ}
%     \).
%     
%     \begin{remark}
%     We can see that this approach forces us to invert $\Sigma^{(h)}_{YY}$, a $n\times n$ matrix, which is a costly operation, of order $O(n^3)$. Due to the tree structure and to the Gaussian nature of the processes studied, it is possible to compute all the quantities needed in a linear time, using a ``forward-backward''-like algorithm (here,  ``upward-downward''). The upward step is similar to the pruning algorithm described in \citet[chap. 23]{felsenstein2004}. See also \citet{hoane2013b} for an algorithm linear in the number of iterations.
%     \end{remark}
    
    \paragraph{Complete Likelihood Computation} Using the model described in Section~\ref{subsec:latent_model}, we can use the following decomposition of the complete likelihood:
    \[
    p_{\vvect{\theta}}(\vect{X}) = p_{\vvect{\theta}}(X_1)\prod_{j=2}^{m+n}p_{\vvect{\theta}}\left(\sachant{X_j}{X_{\pa(j)}}\right)
    \]
    Each term of this product is then known, and we easily get $\Esp_{\vvect{\theta}^{(h)}}[\sachant{\log p_{\vvect{\theta}}(\vect{Z},\vect{Y})}{\vect{Y}}]$.% The formulas are given in section~\ref{supp:likelihood} of the~\ref{supp}.

    \paragraph{M step} The difficulty comes here from the discrete variables (location of shifts on the branches). The maximization is exact for the BM but we only raise the objective function for the OUsun, hence computing a Generalized EM \citep[GEM, see][]{dempster1977}. This stems from the independent increment nature of the BM: shifts only affect $p_{\vvect{\theta}}\left(\sachant{X_j}{X_{\pa(j)}}\right)$ on the branches where they occur and the maximization reduces to finding the $K$ highest components of a vector, which has complexity $O(n + K\log(n))$. By contrast, OUsun has autocorrelated increments: shifts affect $p_{\vvect{\theta}}\left(\sachant{X_j}{X_{\pa(j)}}\right)$ on the branches where they occur and on all subsequent branches. Maximization is therefore akin to segmentation on a tree, which has complexity $O(n^K)$.

    \paragraph{Initialization} Initialization is always a crucial step when using an EM algorithm. Here, we use the linear formulation~\ref{eq:BM_lin_def_X} or~\ref{eq:OU_lin_def}, and initialize the vector of shifts \PB{using} a Lasso regression. The selection strength $\alpha$ is initialized using {pairs} of tips {likely to be in the same group.} %See section~\ref{supp:initialization} of the~\ref{supp} for more details on the methods used.
%     We also initialize the selection strength $\alpha$. We use the following property: if $Y_i$ and $Y_j$ are two tips in the same group, then, under an OUsun, \( \Espe{(Y_i-Y_j)^2} = 2\gamma^2(1-e^{-\alpha d_{ij}}) \). Using regression techniques, we can get an initial estimation of $\alpha$ and $\gamma^2$ from all these couples. In practice, we first initialize the position of the shifts, and then use only pairs of tips from the same estimated group. Then, as the groups are only approximated, some of the selected pairs $(Y_i, Y_j)$ might not share the same expectation, and we use a robust regression to get more accurate initial estimates.
  
  %%%%%%%%%%%%%%%%%%%%%%%%%%%%%%%%%%%%%
  \subsection{Model Selection}\label{sec:model_selection}
  
  \paragraph{Model Selection in the iid Case with Unknown Variance} Model selection in a linear regression setting has received a lot of attention over the last few years. In \citet{baraud2009}, the authors developed a non-asymptotic method for model selection in the case where the errors are independent and identically distributed (iid), with an unknown variance. In the following, we first recall their main results, and then {adapt it} to our setting of non-independent errors.\par
   
    We assume that we have the following model of \emph{independent} observations:
    \[
     \vect{Y}' = \vect{s}' + \gamma \vect{E}' \quad \text{ with } \quad  \vect{E}'\sim\Normal(0, \matr{I_n})
    \]
    and we define a collection $\Sr' = \{S'_{\eta}, \eta \in \Mr\}$ of linear subspaces of $\R^n$ that we call \emph{models}, and that are indexed by a finite or countable set $\Mr$. For each $\eta\in\Mr$, we denote by $\vect{\hat{s}}'_\eta = \Proj_{S'_\eta}\vect{Y}'$ the orthogonal projection of $\vect{Y}'$ on $S'_{\eta}$, that is a least-square estimator of $\vect{s}'$, and $\vect{s}'_\eta = \Proj_{S'_\eta}\vect{s}'$ the projection of $\vect{s}'$.
    
        We extract from \citet{baraud2009} the following theorem, that bounds the risk of the selected estimator, and provides us with a non-asymptotic guarantee. It relies on a penalty depending on the $\EDkhi$ function, as defined below:
        
    \begin{definition}[\citet{baraud2009}, Section 4, definitions 2 and 3]
    Let $D$, $N$ be two positive integers, and $X_D$, $X_N$ be two independent $\chi^2$ random variables with degrees of freedom $D$ and $N$ respectively. For $x\leq 0$, define
    \[
      \Dkhi[D, N, x] = \frac{1}{\Espe{X_D}}\Espe{\left(X_D - x\frac{X_N}{N}\right)_+}
    \]
    And define $\EDkhi[D,N,q]$ as the unique solution of the equation \(\Dkhi[D, N, \EDkhi[D, N, q]] = q\) (for $0 < q \leq 1$).
    \end{definition}
    
    \begin{theorem}[\citet{baraud2009}, Section 4, theorem 2 and corollary 1]\label{th:model_selection_baraud2009}
    In the setting defined above, let $D_{\eta}$ be the dimension of $S'_\eta$, and assume that $N_\eta = n - D_\eta \geq 2$ for all $\eta \in \Mr$.
    Let $\Lr = \{L_\eta\}_{\eta\in\Mr}$ be some family of positive numbers such that
      \(
  \Omega' = \sum_{\eta \in \Mr} (D_\eta + 1) e^{-L_\eta} < +\infty
      \), 
      and assume that, for $A>1$, 
      \[
      \pen(\eta) = \pen_{A, \Lr}(\eta) = A \frac{N_\eta}{N_\eta - 1} \EDkhi[D_\eta+1, N_\eta-1, e^{-L_\eta}]
      \]
      Take $\hat{\eta}$ as the minimizer of the criterion:
    \(
    \hat{\eta} = \argmin_{\eta\in\Mr} \norm{\vect{Y}' - \vect{\hat{s}}'_\eta}^2 \left( 1 + \frac{\pen(\eta)}{N_\eta}\right)
     \label{eq:crit_LQ}
    \).\\
Then, assuming that $N_\eta\geq7$ and $\max(L_\eta, D_\eta) \leq \kappa n$ for any $\eta\in\Mr$, with $\kappa < 1$,
     the following non-asymptotic bound holds:
%       \[
% 	\Espe{\frac{\norm{\vect{s}' - \vect{\hat{s}}'_{\hat{\eta}}}^2}{\gamma^2}}  \leq \frac{A}{A-1} \inf_{\eta \in \Mr} \left\{ \frac{\norm{\vect{s}' - \vect{s}'_\eta}^2}{\gamma^2} \left( 1 + \frac{\pen(\eta)}{N_\eta}\right) + \pen(\eta) - D_\eta\right\} + 2A^2\frac{\Omega'}{A-1}
%       \]
%       In addition, if $\kappa <1$, $N_\eta\geq7$ and $\max(L_\eta, D_\eta) \leq \kappa n$ for any $\eta\in\Mr$ then:
	\[
	\Espe{\frac{\norm{\vect{s}' - \vect{\hat{s}}'_{\hat{\eta}}}^2}{\gamma^2}} \leq C(A, \kappa)\left[ \inf_{\eta \in \Mr} \left\{ \frac{\norm{\vect{s}' - \vect{s}'_\eta}^2}{\gamma^2} + \max(L_\eta, D_\eta) \right\} + \Omega'\right]
      \]
    \end{theorem}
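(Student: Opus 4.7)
The plan is to follow the strategy of \citet{baraud2009}: fix an arbitrary comparison model $\eta^\star \in \Mr$, show that the loss at $\hat\eta$ is controlled by the loss at $\eta^\star$ plus a penalty term on a high-probability event, and then take expectation, exploiting the defining property of $\EDkhi$ to handle the unknown variance $\gamma^2$.

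First I would write $\vect{Y}' = \vect{s}' + \gamma \vect{E}'$ with $\vect{E}' \sim \Ncal(0, \matr{I_n})$, and for each $\eta$ decompose $\norm{\vect{Y}' - \vect{\hat s}'_\eta}^2 = \norm{\vect{s}' - \vect{s}'_\eta}^2 + \gamma^2 \norm{\Proj_{S'_\eta{}^\perp}\vect{E}'}^2 + 2\gamma \scalaire{\vect{s}' - \vect{s}'_\eta}{\Proj_{S'_\eta{}^\perp}\vect{E}'}$ by Pythagoras, so that $\gamma^{-2}\norm{\vect{Y}' - \vect{\hat s}'_\eta}^2 \sim \chi^2_{N_\eta}(\norm{\vect{s}' - \vect{s}'_\eta}^2/\gamma^2)$ is a non-central chi-square in $N_\eta$ degrees of freedom. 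The defining inequality from $\hat\eta = \argmin \norm{\vect{Y}' - \vect{\hat s}'_\eta}^2(1 + \pen(\eta)/N_\eta)$ gives, after rearrangement,
\[
\norm{\vect{s}' - \vect{\hat s}'_{\hat\eta}}^2 \le \norm{\vect{s}' - \vect{s}'_{\eta^\star}}^2 + \pen(\eta^\star)\frac{\norm{\vect{Y}' - \vect{\hat s}'_{\eta^\star}}^2}{N_{\eta^\star}} - \pen(\hat\eta)\frac{\norm{\vect{Y}' - \vect{\hat s}'_{\hat\eta}}^2}{N_{\hat\eta}} + R,
\]
where $R$ collects cross terms involving $\vect{E}'$ and $\Proj_{S'_{\hat\eta}}\vect{s}'$ that must be bounded above by $\gamma^2$ times a quadratic form in standard Gaussians, after renormalization by the empirical variance $\hat\sigma^2_\eta = \norm{\vect{Y}' - \vect{\hat s}'_\eta}^2/N_\eta$.

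The key step, and the main technical obstacle, is to control the bad deviations for every $\eta \in \Mr$ simultaneously. For this I would apply the definition of $\EDkhi$: by construction, $\Dkhi[D_\eta+1, N_\eta-1, \EDkhi[D_\eta+1, N_\eta-1, e^{-L_\eta}]] = e^{-L_\eta}$, which (after a Laplace-transform computation on independent $\chi^2$ variables) translates into a tail bound of the form
\[
\PP\!\left[ \norm{\Proj_{S'_\eta}\vect{E}'}^2 + \norm{\Proj_{\Span(\vect{s}'-\vect{s}'_\eta)}\vect{E}'}^2 \ge \EDkhi[D_\eta+1, N_\eta-1, e^{-L_\eta}]\,\frac{\norm{\Proj_{S'_\eta{}^\perp}\vect{E}'}^2}{N_\eta-1} \right] \le (D_\eta+1)\,e^{-L_\eta}.
\]
Summing over $\eta \in \Mr$ produces a global failure probability of order $\Omega'$. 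On the complementary event, the cross term $R$ is dominated by $\pen(\hat\eta)\hat\sigma^2_{\hat\eta}$ (this is exactly where the factor $A>1$ is needed to absorb it with a slack), and one obtains $\norm{\vect{s}' - \vect{\hat s}'_{\hat\eta}}^2 \le C(A,\kappa)\,[\norm{\vect{s}' - \vect{s}'_{\eta^\star}}^2 + \gamma^2 \pen(\eta^\star)]$ with high probability.

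To finish I would integrate the deviation inequality and use standard estimates on $\EDkhi[D, N, e^{-L}] = O(D + L)$ valid under the regime $\max(L_\eta, D_\eta) \le \kappa n$ with $\kappa < 1$ and $N_\eta \ge 7$; these estimates are proved in \citet{baraud2009} via explicit Laplace bounds on $\chi^2$ ratios. Optimizing the resulting bound over $\eta^\star \in \Mr$ and dividing by $\gamma^2$ yields the oracle inequality. The hardest part is really the joint handling of the data-driven variance estimator $\hat\sigma^2_{\hat\eta}$ and the random model index $\hat\eta$, which is precisely what the $\EDkhi$ calibration is designed to resolve.
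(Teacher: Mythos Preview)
The paper does not prove this theorem: it is quoted verbatim from \citet{baraud2009} (Theorem~2 and Corollary~1 therein) and is used as a black box to derive Proposition~\ref{prop:model_selection}. So there is no ``paper's own proof'' to compare against; the authors simply invoke the result and move on to adapting it to the tree-structured, non-iid setting via a Cholesky whitening (the details of that adaptation are deferred to Appendix~\ref{supp:proof_model_selection}).

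Your sketch is a reasonable high-level outline of the Baraud--Giraud--Huet argument, and you correctly identify the two essential ingredients: the basic inequality coming from the definition of $\hat\eta$, and the calibration of $\EDkhi$ that controls the ratio of a $\chi^2_{D_\eta+1}$ to a $\chi^2_{N_\eta-1}$ uniformly over $\Mr$. A couple of points are imprecise, though. First, the displayed tail bound you write is not quite the content of $\Dkhi$: the definition involves an \emph{expectation} of the positive part $(X_D - x X_N/N)_+$, not a tail probability, and the passage from this expectation bound to control of the risk requires an integration argument rather than a direct union bound on a single event. Second, the cross term $R$ is not absorbed by $\pen(\hat\eta)\hat\sigma^2_{\hat\eta}$ alone; one also needs the extra dimension (the ``$+1$'' in $D_\eta+1$) coming from the one-dimensional span of $\vect{s}'-\vect{s}'_\eta$, which you allude to but do not fully exploit. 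These are technical refinements rather than gaps in the overall strategy, and since the paper itself does not reprove the theorem, your sketch already goes further than what is required here.
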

    
    The penalty used here ensures an oracle inequality: in expectation, the risk of the selected estimator is bounded by the risk of the best possible estimator of the collection of models, up to a multiplicative constant, and a residual term that depends on the dimension of the oracle model. Note that if the collection of models is poor, such an inequality has low value. We refer to \citet{baraud2009} for a more detailed discussion of this result.
    
%     \begin{remark}
%     As noted in \citet{baraud2009}, the previous criterion~\ref{eq:crit_LQ} can be re-written in term of likelihood, as:
%     \begin{equation}\label{eq:crit_LL}
%     \Crit_{LL}(\eta) = \frac{n}{2} \log\left(\frac{\norm{Y - \vect{\hat{s}}_\eta}_\matr{V}^2}{n}\right) + \frac12 \pen'(\eta)
%     \end{equation}
%     with     \( \pen'(\eta) = n\log\left(1 + \frac{\pen(\eta)}{N_\eta}\right) \). As we use maximum-likelihood estimators, we rather use this formulation for the implementation.
%     \end{remark}
    
    \paragraph{Adaptation to the Tree-Structured Framework} We use the linear formulation described in~\ref{subsec:linear_model}, and assume that we are in the OUsun model (this procedure would also work for a BM with a deterministic root). Then, if $\matr{V}$ is a matrix of size $n$, with $V_{ij} = e^{-\alpha d_{ij}}, \forall (i,j)\in\intervalleentier1n^2$, we have: 
    \[
    \vect{Y} = \matr{T}\matr{W}(\alpha)\vect{\Delta} + \gamma \vect{E} = \vect{s} + \gamma \vect{E} \quad \vect{E} \sim \Normal(0, \matr{V})
    \]
    We assume that $\alpha$ is fixed, so that the design matrix $\matr{T}\matr{W}(\alpha)$ and the structure matrix $\matr{V}$ are known and fixed.
A \emph{model} is defined here by the position of the shifts on the branches of the tree, i.e.\ by the non-zero components of $\vect{\Delta}$ (with the constraint that the first component, the root, is always included in the model). We denote by $\Mr = \bigcup_{K = 0}^{p-1} \Sr_K^{PI}$ the set of allowed (parsimonious) allocations of shifts on branches (see Section~\ref{subsec:complexity_model}), $p$ being the maximum allowed dimension of a model. From proposition~\ref{prop:parsimony_linear_independence}, for $\eta \in \Mr$, the columns vectors $\matr{T}_\eta$ are linearly independent, and the model $S_\eta = \Span(\vect{T_i}, i\in\eta)$ is a linear sub-space of $\R^{n}$ of dimension $D_\eta = \card{\eta} = K_{\eta} + 1$, $K_\eta$ being the number of shifts in model $\eta$. Note that as $\matr{W}(\alpha)$ is diagonal invertible, it does not affect the definition of the linear subspaces. The set of models is then $\Sr = \{S_\eta, ~ \eta\in\Mr\}$.\par
We define the Mahalanobis norm associated to $\matr{V}^{-1}$ by:
\(
 \mahanorm{\vect{R}}{V} = \vect{R}^T \matr{V}^{-1} \vect{R},~ \forall \vect{R} \in \R^{n}
\).

The projection on $S_\eta$ according to the metric defined by $\matr{V}^{-1}$ is then:
    \[
     \vect{\hat{s}}_\eta = \Proj^{\matr{V}^{-1}}_{S_\eta}(\vect{Y}) = \argmin_{\vect{a} \in S_\eta} \mahanorm{\vect{Y} - \vect{a}}{V}^2 \quad \text{ and }\quad \vect{s}_\eta = \Proj^{\matr{V}^{-1}}_{S_\eta}(\vect{s})
    \]
    For a given number of shifts $K$, we define the best model with $K$ shifts as the one maximizing the likelihood, or, equivalently, minimizing the least-square criterion for models with $K$ shifts:
    \[
    \vect{\hat{s}}_K = \argmin_{\eta \in \Sr, \card{\eta} = K+1} \mahanorm{\vect{Y} - \vect{\hat{s}}_\eta}{V}^2
    \]
    The idea is then to slice the collection of models by the number of shifts $K$ they employ. Thanks to the EM algorithm above, we are able to select the best model in such a set. The problem is then to select a reasonable number of shifts. To compensate the {increase in} the likelihood due to over-fitting, using the model selection procedure described above, we select $K$ using {the following} penalized criterion{:}
    \begin{equation}\label{eq:crit_MC}
    \Crit_{LS}(K) = \mahanorm{\vect{Y} - \vect{\hat{s}}_K}{V}^2 \left( 1 + \frac{\pen(K)}{n-K-1}\right)
    \end{equation}
    As noted in \citet{baraud2009}, the previous criterion can {equivalently} be re-written in term of likelihood, as:
    \begin{equation}\label{eq:crit_LL}
    \Crit_{LL}(K) = \frac{n}{2} \log\left(\frac{\mahanorm{\vect{Y} - \vect{\hat{s}}_K}{V}^2}{n}\right) + \frac12 \pen'(K)
    \end{equation}
    with     \( \pen'(K) = n\log\left(1 + \frac{\pen(K)}{n-K-1}\right) \). As we use maximum-likelihood estimators, we chose this formulation for the implementation.
The following proposition then holds:

        \begin{proposition}[Form of the Penalty and guaranties ($\alpha$ known)]\label{prop:model_selection}
    Let $\Lr = \{L_K\}_{K\in\intervalleentier0{p-1}}$, with $p \leq \min\left(\frac{\kappa n}{2 + \log(2) + \log(n)}, n-7\right)$, the maximum dimension of a model, with $\kappa < 1$, and:
%     \begin{equation}\label{eq:L_D}
%     L_D = \log\card{\Sr_{D-1}^{PI}} + 2\log(D+1), \forall D\in\intervalleentier1{p}
%     \end{equation}
    \begin{equation}\label{eq:L_K}
    L_K = \log\card{\Sr_{K}^{PI}} + 2\log(K+2), \forall K\in\intervalleentier0{p-1}
    \end{equation}
    Let $A>1$ and assume that 
    \[
      \pen_{A, \Lr}(K) = A \frac{n - K - 1}{n - K - 2} \EDkhi[K, n - K - 2, e^{-L_{K}}]
    \]
    Suppose that $\hat{K}$ is a minimizer of~\ref{eq:crit_MC} or~\ref{eq:crit_LL} with this penalty. Then:
     \[
      \Espe{\frac{\mahanorm{\vect{s} - \vect{\hat{s}}_{\hat{K}}}{V}^2}{\gamma^2}} \leq C(A, \kappa)\inf_{\eta \in \Mr} \left\{ \frac{\mahanorm{\vect{s} - \vect{s}_\eta}{V}^2}{\gamma^2} + (K_\eta + 2)\left(3 +\log(n)\right) \right\}
     \]
     with $C(A, \kappa)$ a constant depending on $A$ and $\kappa$ only.
    \end{proposition}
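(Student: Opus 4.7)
The plan is to reduce the correlated-errors problem to the iid setting of Theorem~\ref{th:model_selection_baraud2009} via a whitening transformation, then verify Baraud's hypotheses using the combinatorial counting of parsimonious scenarios. Let $\matr{V}^{-1/2}$ denote the symmetric positive square root of $\matr{V}^{-1}$, and set $\vect{Y}' = \matr{V}^{-1/2}\vect{Y}$, $\vect{s}' = \matr{V}^{-1/2}\vect{s}$ and $\vect{E}' = \matr{V}^{-1/2}\vect{E}$, so that $\vect{Y}' = \vect{s}' + \gamma \vect{E}'$ with $\vect{E}' \sim \Normal(\vect{0}, \matr{I_n})$. For $\eta \in \Mr$, I set $S'_\eta = \matr{V}^{-1/2} S_\eta$: this is a linear subspace of $\R^n$ of the same dimension $D_\eta = K_\eta + 1$ since $\matr{V}^{-1/2}$ is invertible, the Euclidean projection of $\vect{Y}'$ onto $S'_\eta$ equals $\matr{V}^{-1/2}\vect{\hat{s}}_\eta$, and $\|\vect{Y}' - \Proj_{S'_\eta}\vect{Y}'\|^2 = \mahanorm{\vect{Y} - \vect{\hat{s}}_\eta}{V}^2$. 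Hence the criterion~\ref{eq:crit_MC} coincides exactly with Baraud's criterion applied to the primed problem, and the Mahalanobis risks on the tree side transform into the standard Euclidean risks on the primed side.

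Next, I verify the Kraft-type summability of the weights. Grouping models by the number of shifts $K_\eta = K$, using the bijection $\{\eta \in \Mr : K_\eta = K\} \bij \Sr_K^{PI}$ from Section~\ref{subsec:complexity_model}, and the fact that each such model has dimension $K+1$, the choice~\ref{eq:L_K} gives
\[
\Omega' = \sum_{K=0}^{p-1}\sum_{\eta \in \Sr_K^{PI}}(D_\eta + 1)e^{-L_K} = \sum_{K=0}^{p-1}(K+2)\cdot\card{\Sr_K^{PI}}\cdot\frac{1}{\card{\Sr_K^{PI}}(K+2)^2} = \sum_{K=0}^{p-1}\frac{1}{K+2} \leq 1 + \log p.
\]

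Third, I bound $\max(L_K, D_K)$ both by $(K+2)(3+\log n)$ (for the oracle term) and by $\kappa n$ (for Baraud's regularity condition). In the binary case, Corollary~\ref{expression_binary} gives $\card{\Sr_K^{PI}} = \binom{2n-2-K}{K} \leq \binom{2n}{K} \leq (2ne/K)^K$, whence $\log\card{\Sr_K^{PI}} \leq K(1 + \log 2 + \log n)$ and
\[
L_K \leq K(1 + \log 2 + \log n) + 2\log(K+2) \leq (K+2)(3 + \log n)
\]
for $K+2\leq n$; a similar exponential bound in the general case follows from the recursion of Proposition~\ref{prop:number_eq_classes}. Clearly $D_K = K+1 \leq (K+2)(3+\log n)$. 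The assumption $p \leq \kappa n / (2+\log 2 +\log n)$ then forces $\max(L_K, D_K) \leq \kappa n$ for every $K \leq p-1$, and $p \leq n-7$ ensures $N_\eta = n - K - 1 \geq 7$.

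Applying Theorem~\ref{th:model_selection_baraud2009} to the primed problem and translating back via the whitening identification yields
\[
\Espe{\frac{\mahanorm{\vect{s} - \vect{\hat{s}}_{\hat K}}{V}^2}{\gamma^2}} \leq C(A,\kappa)\left[\inf_{\eta\in\Mr}\left\{\frac{\mahanorm{\vect{s}-\vect{s}_\eta}{V}^2}{\gamma^2} + (K_\eta+2)(3+\log n)\right\} + \Omega'\right],
\]
and the stated inequality follows by absorbing $\Omega' \leq 1 + \log p \leq 1 + \log n$ into the constant, since the infimum already dominates $\log n$ via the term $(K_\eta+2)(3+\log n) \geq 2(3+\log n)$. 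The main obstacle is essentially book-keeping: one must reconcile the arguments $(D_\eta+1, N_\eta-1)$ of $\EDkhi$ appearing in Baraud's penalty with those written in the statement (an off-by-one in the dimension coming from counting the root among the shifts), and extend the binary-tree bound on $\card{\Sr_K^{PI}}$ to arbitrary ultrametric trees, since the recursion of Proposition~\ref{prop:number_eq_classes} does not immediately yield a clean closed-form upper bound.
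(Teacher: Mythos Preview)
Your proof is correct and follows essentially the same approach as the paper: a whitening transformation of $\vect{Y}$ to reduce to the iid setting of Theorem~\ref{th:model_selection_baraud2009}, followed by verification of the summability condition on $\Omega'$ and the bound $\max(L_K,D_K)\leq \kappa n$ using the combinatorics of $\card{\Sr_K^{PI}}$. The only cosmetic difference is that the paper uses the Cholesky factor of $\matr{V}$ rather than the symmetric square root (as alluded to in Section~\ref{sec:chelonia}); your flagged off-by-one in the $\EDkhi$ arguments and the binary-tree restriction for the closed-form bound on $\card{\Sr_K^{PI}}$ are both accurate observations about the statement as written.
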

  
  The proof of this proposition can be found in Appendix~\ref{supp:proof_model_selection}. It relies on theorem~\ref{th:model_selection_baraud2009}, adapting it to our tree-structured observations.
  
 \begin{remark}
  With this oracle inequality, we can see that we are missing the oracle by a $\log(n)$ term. This term is known to be unavoidable, see \citet{baraud2009} for further explanations.
 \end{remark}
    
    \begin{remark}
Note that the chosen penalty {may} depend on the topology of the tree through the term $\card{\Sr_{K}^{PI}}$ (see Section~\ref{subsec:complexity_model}).
    \end{remark}
    
    \begin{remark}
    The penalty involves a constant $A>1$, that need{s} to be chosen by the user. Following \citet{baraud2009} \PB{who tested a series of values}, we fixed this constant to $A=1.1${.}
    \end{remark}

\section{Simulations Studies} \label{sec:simulations}
%%%%%%%%%%%%%%%%%%%%%%%%%%%%%%%%%%%%%%%%%%%%%%%%%%%%%%%%%%%%%%%%%%%%%

  %%%%%%%%%%%%%%%%%%%%%%%%%%%%%%%%%%%%%
  \subsection{Simulations Scheme}
%     \paragraph{Goals} 
%     \begin{itemize}
%       \item Influence of continuous parameters $\alpha$ and $\gamma^2$ on the fit.
%       \item Influence of the size of the tree.
%       \item Influence of knowing $\alpha$.
%       \item Test of the model selection procedure.
%     \end{itemize}
%     
    %\paragraph{Simulation procedures}
    We tested our algorithm on data simulated according to an OUsun, with varying parameters. The simulation scheme is inspired \PB{by the work of} \citet{uyeda2014}.
    We first generated three distinct trees with, respectively, $64, 128$ and $256$ tips, using a pure birth process with {birth rate} $\lambda = 0.1$. %, thanks to the functions implemented in the \printR{R} package \citeR{TreeSim}. 
The tree heights were scaled to one, and their topology and branch lengths were fixed for the rest of the simulations. %An instance of such tree is plotted in Figure~\ref{simus:eq_shifts}.%, thanks to the \printR{R} package \citeR{ape}. We used that package for all the tree-related manipulations.
    We then used a star-like simulation study scheme, fixing a base scenario, and exploring the space of parameters one direction at {the} time. The base scenario was taken to be relatively \enquote{easy}, with $\beta_1 = 0$ (this parameter was fixed for the rest of the simulations), $\alpha_b = 3$ (i.e $t_{1/2,b} = 23\%$), $\gamma^2_b = 0.5$ and $K_b = 5$.
    The parameters then varied in the following ranges: the phylogenetic half life $t_{1/2} = \ln(2)/\alpha$ took $11$ values in $\intervalleff{0.01}{10}$;
     the root variance $\gamma^2 = \frac{\sigma^2}{2\alpha}$ took $9$ values in $\intervalleff{0.05}{25}$;
     the number of shifts $K$ took $9$ values in $\intervalleff{0}{16}$ (see Figures~\ref{fig:ll_t_gamma}-\ref{fig:beta_K_ARI} for the exact values taken).
     The problem was all the more \emph{difficult} that $\gamma^2$, $t_{1/2}$ or $K$ were large.\par
    For each simulation, the $K$ shifts were generated in the following way. First, their values were drawn according to a mixture of two Gaussian distributions, $\Normal(4,1)$ and $\Normal(\ensuremath{-4},1)$, in equal proportions. The mixture was chosen to avoid too many shifts of small amplitude. Then, their positions were chosen to be balanced: we first divided the tree in $K$ segments of equal heights, and then randomly drew in each segment an edge where to place a shift. {We only kept} parsimonious allocations{.}\par
    % \item[Repetitions] Each configuration is repeated $N = dim(results_summary_K_select)[1]/(length(alpha) + length(gamma) + length(K) + 1) / 3$ times, leading to $dim(results_summary_K_select)[1]$ data sets simulated.

    Each of these configurations was repeated $200$ times, leading to $16200$ simulated data sets. An instance of a tree with the generated data is plotted in Figure~\ref{simus:eq_shifts}.

        \begin{figure}[!ht]

\includegraphics[width=\linewidth]{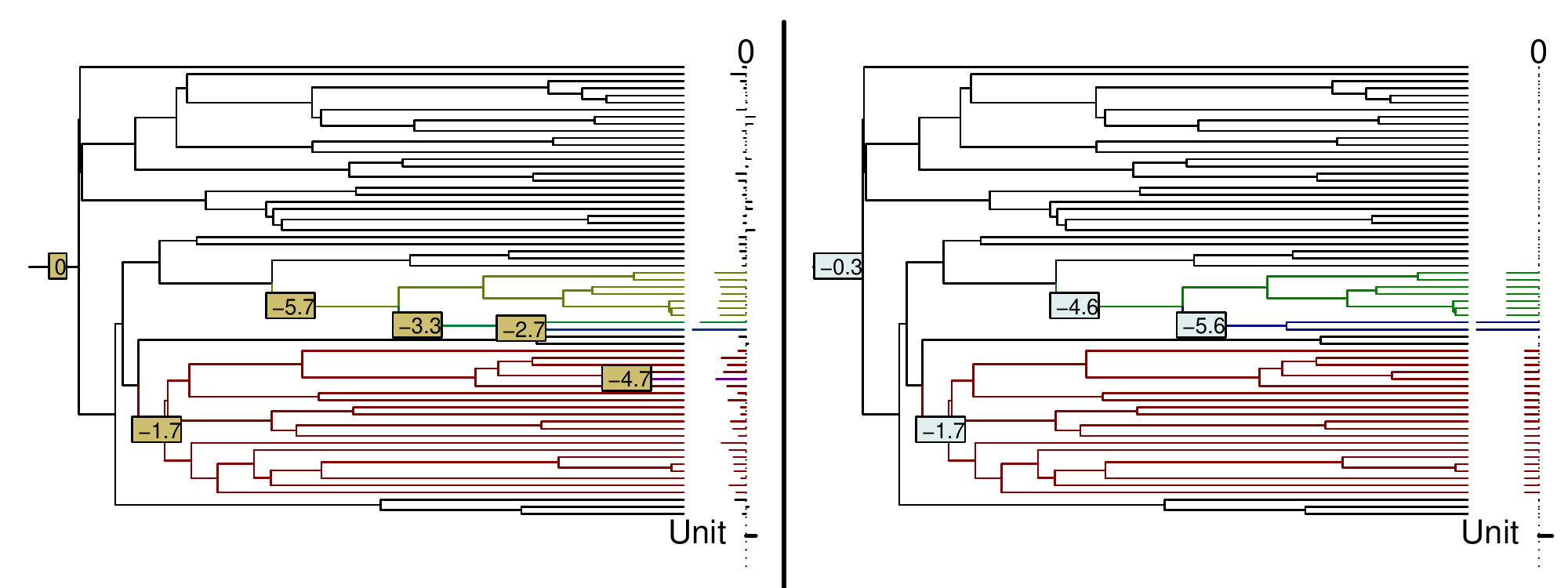} 

\caption{\emph{Left}: Simulated configuration (with $t_{1/2} = 0.75$, $\gamma^2 = 0.5$ and $K = 5$). The shifts positions and values are marked on the tree. The value of the character generated (positive or negative) is represented on the right. The colors \PB{of the branches} correspond to the true regimes, black being the ancestral state. %The tree is a pure birth process generated tree, with unit height, $64$ tips and $\lambda = 0.1$.
\emph{Right}: One of the three equivalent allocations of shifts for the model inferred from the data, with corresponding vector of mean tip values. Shifts \PB{not recovered are located on pendant edges, and} have low influence on the data. The two other equivalent allocations can be easily deducted from this one.}\label{simus:eq_shifts}
\end{figure}
    
%     \begin{figure}[!ht]
%     \begin{center}
%     <<plot_tree, echo=FALSE, cache=TRUE, fig.width=8, fig.height=4, out.width='\\textwidth'>>=
%     set.seed(18051202)
%     pars <- datasetsim(alpha = simparams[1, "alpha"],
%                         gamma = simparams[1, "gamma"],
%                         K = simparams[1, "K"],
%                         ntaxa = simparams[1, "ntaxa"],
%                         n = simparams[1, "n"],
%                         grp = simparams[1, "grp"])
%     ## Clustering of tips and edges
%     plot.data.process.actual(Y.state = pars$Y_data,
%                              phylo = trees[[paste(pars$ntaxa)]], 
%                              params = list(shifts = pars$shifts, 
%                                            optimal.value = beta_0),
%                              color_characters = tip_regimes,
%                              color_edges = edges_regimes,
%                              adj.root = 2,
%                              automatic_colors = TRUE)
%     @
%     \caption{Simulated configuration in the base scenario ($\alpha_b = alpha_base$, $\gamma^2_b = gamma_base$ and $K_b = K_base$). The shifts positions and values are marked on the tree. The value of the character generated (positive or negative) is represented on the right. The colors correspond to the true regimes, black being the ancestral state. The tree is a pure birth process generated tree, with unit height, $pars$ntaxa$ tips and $\lambda = lambda$.}\label{simus:plot_tree}
%     \end{center}
%     \end{figure}
  %%%%%%%%%%%%%%%%%%%%%%%%%%%%%%%%%%%%%
    \subsection{Inference Procedures}
    For each generated dataset, we ran our EM procedure with fixed values of $K \in \intervalleentier{0}{\floor{\sqrt{n}}}$, $n$ being the number of tips of the tree. Remark that for $n = 64$, $\floor{\sqrt{n}} = 8$, and we have no hope of detecting true values of $K$ above $8$ \PB{(see remark~\ref{rk:sqrt_n} for an explanation of the bound in $\sqrt{n}$)}. The number of shifts $K_s$ was chosen thanks to our penalized criterion, and we kept inferences corresponding to both $K_s$ and the true number $K_t$.\par
    We ran two sets of estimations for $\alpha$ either known or estimated.
    The computations took respectively $66$ and $570$ (cumulated) days of CPU time. This amounts to a mean computational time of around $6$
    minutes ($367$ seconds) for one estimation when $\alpha$ is fixed, and $52$
    minutes ($3137$ seconds) when $\alpha$ is estimated, \PB{with large differences between easy and difficult scenarios.}
    %minutes and $round((mean_time_alpha_known) %% 60, 0)$ seconds) when $\alpha$ is supposed to be known.
 
    %The computations took around $signif(computation_time/3600/24, 2)$ days of CPU time. This amounts to a mean computational time of around $(mean_time) %/% 60$ minutes ($round(mean_time, 0)$ seconds) for one estimation.
    %one estimation of $round(mean_time, 2)$ seconds ($(mean_time) %/% 60$ minutes and $round((mean_time) %% 60, 0)$ seconds) when $\alpha$ is estimated.

%We are grateful to the INRA MIGALE bioinformatics platform (\url{http://migale.jouy.inra.fr}) for providing the computational resources needed for these experiments.

    %%%%%%%%%%%%%%%%%%%%%%%%%%%%
    \subsection{Scores Used} The convergence of the EM algorithm was assessed through the comparison of the likelihood of the true and estimated parameters, and the comparison of mean number of EM steps needed when $\alpha$ is fixed or estimated. {The quality of the estimates of} $\beta_1$, $t_{1/2}$ and $\gamma^2$ {was assessed using the coefficient of variation}.
The model selection procedure was evaluated by comparing the true number of shifts with the estimated one, {which} should be lower. We do not expect to find the exact number as some shifts, {which} are too small or too close of the tips, cannot be detected.
%     To see whether shifts are correctly placed, we compute the scores listed below. $TP$ denotes the True Positives (predicted edges on which a shift actually occurred) and $FP$, $TN$, $FN$, respectively, the False Positive, True Negative, False Negative. Here, $TP + FN = K_t$ the true number of shifts, and $FP + TN = n + m - 1 - K_t$ the number of edges without any shift. Similarly, $TP + FP = K_s$ the selected number of shifts, and $FN + TN = n+m-1-K_s$ the number of edges without predicted shifts.
%     \begin{itemize}
%     \item The sensitivity: $\frac{TP}{K_t}$ (proportion of well predicted shifts among all shifts to be predicted)
%     \item The False Positive Rate (FPR): $\frac{FP}{n + m - K_t}$ (proportion of false positive among all edges with no shifts). FPR = 1 - specificity.
%     \item The False Discovery Rate (FDR): $\frac{FP}{K_s}$ (proportion of false positive among predicted shifts). FDR = 1 - precision.
%     \end{itemize}
To evaluate the quality of the clustering of the tips, the only quantity we can observe, we used the Adjusted Rand Index \citep[ARI,][]{hubert1985} between the true clustering of the tips, and the one induced by the estimated shifts. The ARI \PB{is proportional to the number of concordant pairs in two clusterings and} {has maximum value of $1$ \PB{(for identical clusterings)} and expected value of $0$ {(for random clusterings)}. Note that this score {is conservative as} shifts of small intensity, {which} are \PB{left aside} by our model selection procedure, produce \enquote{artificial} groups that cannot be reconstructed. 
% The euclidean distance between the mean values at the tips under the true parameters and the estimated ones. This criteria is more related to the model selection procedure.

%     \paragraph{Implementation} Package \citeR{mclust} was used for ARI computations. Plots were made thanks to packages \citeR{ggplot2} and \citeR{reshape2}.
    
    %%%%%%%%%%%%%%%%%%%%%%%%%%%%%%%%%%%%%
    \subsection{Results} The selection strength is notoriously difficult to estimate, with large ranges of values giving similar behaviors \citep[see][]{thomas2014}. We hence first analyse the impact of estimating $\alpha$ on our estimations, showing that the main behavior of the algorithm stays the same. Then, we study the shifts reconstruction procedure.

    \begin{figure}[h!]
\begin{center}

\includegraphics[width=\linewidth]{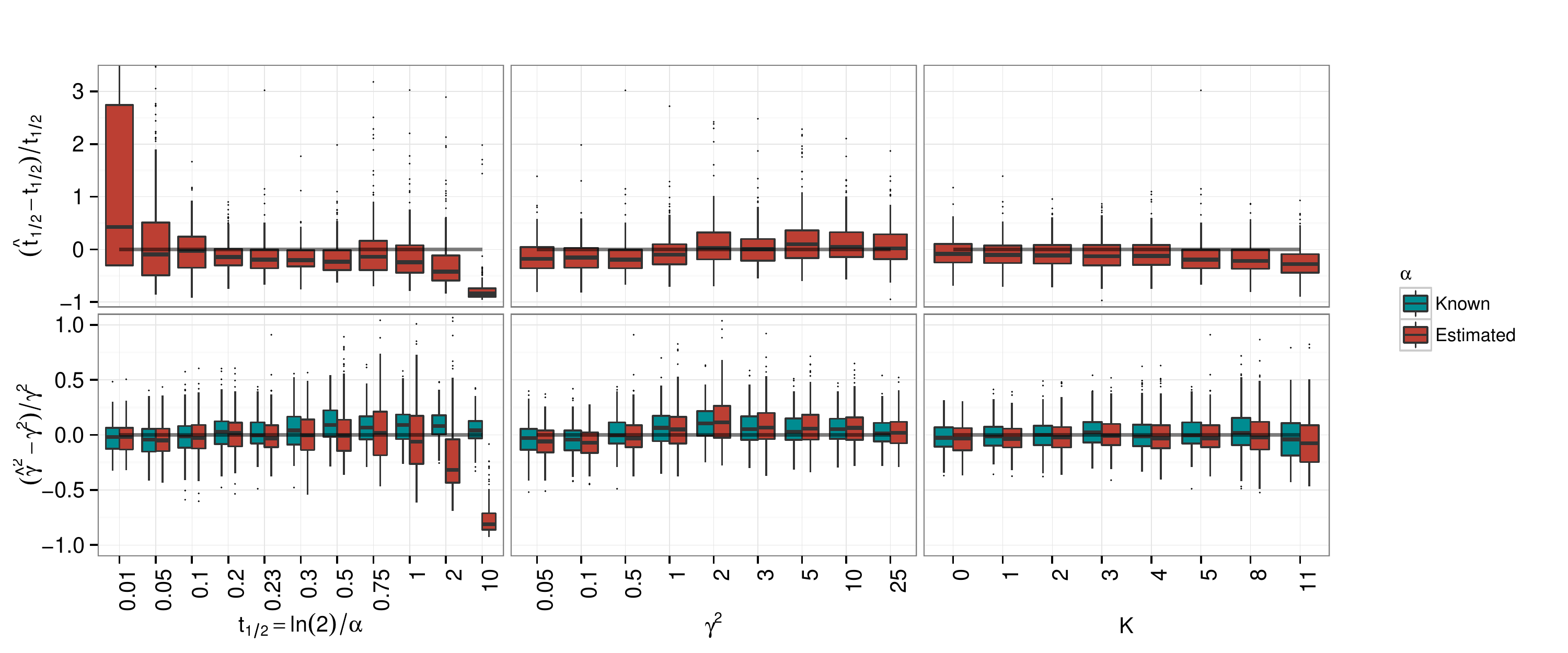} 

\caption{Box plots over the $200$ repetitions of each set of parameters, for the
% log-likelihood (top), 
phylogenetic half-life (\PB{top}) and root variance (bottom) with $K$ estimated, and $\alpha$ fixed \PB{ to its true value (blue}) or estimated (\PB{red}), on a tree with $128$ taxa.
\PB{For better legibility, the $y$-axis of these two rows were re-scaled, omitting some outliers (respectively, for $t_{1/2}$ and $\gamma^2$, $0.82\%$ and $0.46\%$ of points are omitted). The whisker of the first box for $t_{1/2}$ goes up to $7.5$.}}\label{fig:ll_t_gamma}
\end{center}

\begin{center}

\includegraphics[width=\linewidth]{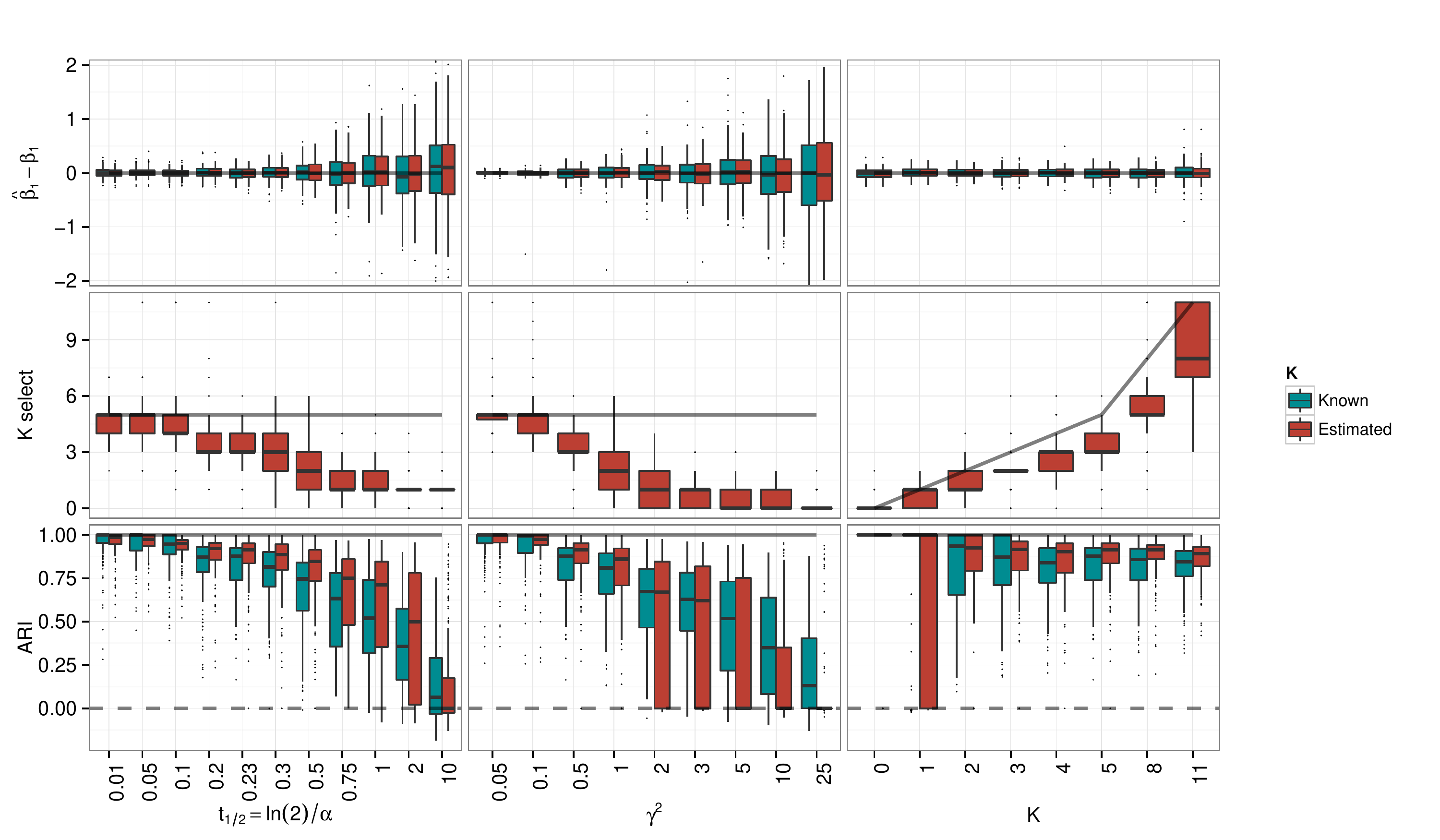} 

\caption{Same for $\beta_1$ (top), the number of shifts (middle) and ARI (bottom), with $\alpha$ estimated, and $K$ fixed \PB{to its true value (blue}) or estimated (\PB{red}). 
\PB{As previously, the $y$-axis of the first row ($\beta_1$) was re-scaled, omitting some outliers ($1.38\%$ of points are omitted).}}
%Box plots for $\beta_1$ (top), the number of shifts (middle) and ARI (bottom) with the number of shifts fixed or estimated, on a tree with $128$ taxa.}
\label{fig:beta_K_ARI}
\end{center}

\end{figure}

\paragraph{Convergence and Likelihood}

{For $\alpha$} known, all estimations converged in less than $49$ iterations, with a median number of $13$ iterations. {For $\alpha$} estimated, the number of iterations increased greatly, with a median of $69$, and a fraction of estimations (around $3.3$\%) that reached the maximum allowed number (fixed at $1000$ iterations) without converging. Unsurprisingly, the more difficult the problem, the more iterations were needed.
    %We can see in Figure~\ref{fig:ll_t_gamma} (first \PB{row}) that 
    The log-likelihoods of the estimated parameters \PB{are close} to the true ones, even when $\alpha$ is estimated \PB{(see supplementary Figure \ref{fig:ll_t_gamma_rev} in Appendix~\ref{supp:SimulationAnalysis}, first row)}. 
%The results are only shown for a tree with $128$ tips, but are similar for trees with $64$ or $256$ tips.
    
    \paragraph{Estimation of continuous parameters}
    Figure~\ref{fig:ll_t_gamma} (first \PB{row}) shows that we tend to slightly over-estimate $\alpha$ in general. The estimation is particularly bad for large values of $\alpha$ (with a high variance on the result, see first box of the \PB{row}), and low values of $\alpha$. In this regime, the model is \enquote{over-confident}, as it finds a higher selection {strength} than the {real} one {and therefore a} smaller variance (second \PB{row} of Figure~\ref{fig:ll_t_gamma}).
    For smaller and bigger trees, the estimators behave {in} the same way, but with degraded or improved values, as expected. We also note that taking the true number of shifts instead of the estimated one slightly degrades our estimation of these parameters (see supplementary Figure~\ref{fig:ll_t_gamma_rev} in Appendix~\ref{supp:SimulationAnalysis}).
    The estimation of $\beta_1$ is not affected by the knowledge of $\alpha$ or $K$ (see Figure~\ref{fig:beta_K_ARI}, first \PB{row}), and only has an increased variance for more difficult configurations.
    In the remainder, we only show results obtained for estimated $\alpha$ as estimating $\alpha$ does not impact ARI, $\hat{K}$ and $\hat{\beta}_0$ (see supplementary Figure~\ref{fig:beta_K_ARI_rev} in Appendix~\ref{supp:SimulationAnalysis}).

    \paragraph{Estimation of the number of shifts} The way shifts were drawn ensures that they are not too small in average, and that they are located all along the tree. Still, some shifts have a very small influence on the data, and are hence hard to detect (see Figure~\ref{simus:eq_shifts}). The selection model procedure almost always under-estimates the number of shifts, except in very favorable cases (Figure~\ref{fig:beta_K_ARI}, second \PB{row}). This behavior is nonetheless expected, as allowing more shifts does not guarantee that the right shifts will be found (see supplementary Figures~\ref{simus:fpr} and~\ref{simus:sensitivity} in Appendix~\ref{supp:SimulationAnalysis}). 

    \paragraph{Clustering of the tips} {T}he ARI tends to be degraded for small values of $\alpha$ or high variance, but remains positive {(Figure~\ref{fig:beta_K_ARI}, third \PB{row})}. When only one shift occurs, the ARI is very unstable, but for any other value of $K$, it stays quite high. Finally,  knowing the {number of shifts} does not improve the ARI. %In a clustering point of view, the model selection hence works quite well.
    
%     \begin{figure}[!ht]
%     \begin{center}
%     <<plot_ARI, echo=FALSE, warning=FALSE, cache=TRUE, fig.width=12, fig.height=8, out.height='0.4\\textheight'>>=
%     plot_results_boxplot_all("adjrandind", NULL, "ARI",
%                              ylimit = c(-0.2, 1.1), outliers = TRUE)#,
%                         # "K_type", "K_select")
%     @
%     \caption{ARI computed for the different configurations, with box-plots over the repetitions. The line represents a smoothing thanks to a local polynomial regression fitting.}\label{simus:ARI}
%     \end{center}
%     \end{figure}
    
    \paragraph{Equivalent Solutions}

When $\alpha$ {and $K$ are both} estimated, only $5.8\%$ of the configurations have $2$ or more equivalent solutions. One inferred {configuration} with three equivalent solutions is presented Figure~\ref{simus:eq_shifts}. %The two shifts that have been ignored are very close from the tips, and have almost no influence on the data, while the other shifts are correctly placed.

    \paragraph{Comparison with \printR{bayou}} As mentioned above, our simulation scheme, although not completely equivalent to the scheme used in \citet{uyeda2014}, is very similar, so that we can compare our results with theirs. The main differences lies in the facts that we took a grid on $\gamma^2 = \sigma^2/2\alpha$ instead of $\sigma^2$, and that we took shifts with higher intensities, making the detection of shifts easier. 
    %For each set of parameters, they estimated the posterior distribution using some fixed prior, two independent chains of $200,000$ generations, and a burning of  $60,000$ generations. 
    We can see that we get the same qualitative behaviors for our estimators, with the selection strength $\alpha$ over or under estimated, respectively, in small or large values regions. The main difference lies in the estimation of the number of shifts. Maybe because of the priors they used ($K \sim \text{Conditional Poisson} ( \lambda = 9, K_{max} = n/2)$), they tend to estimate similar numbers of shifts (centered on $9$) for any set of parameters. In particular, while our method seems to be quite good at detecting situations where there are no shifts at all, theirs seems unable to catch these kind of configurations, despite the fact that their shifts have low intensity, leading to a possible over-fitting of the data.
    
    \PB{Overall, the behavior of the algorithm is quite satisfying. Our model selection procedure avoids over-fitting, while recovering the correct clustering structure of the data. It furthermore allows for a reasonable estimation of the continuous parameters, except for $\alpha$ which is notoriously difficult to estimate.}

\section{Case Study: Chelonian Carapace Length Evolution} \label{sec:chelonia}

%%%%%%%%%%%%%%%%%%%%%%%%%%%%%%%%%%%
%\subsection{Chelonian Carapace Length Evolution}

\begin{figure}[!ht]
  \begin{center}

\includegraphics[width=0.8\linewidth]{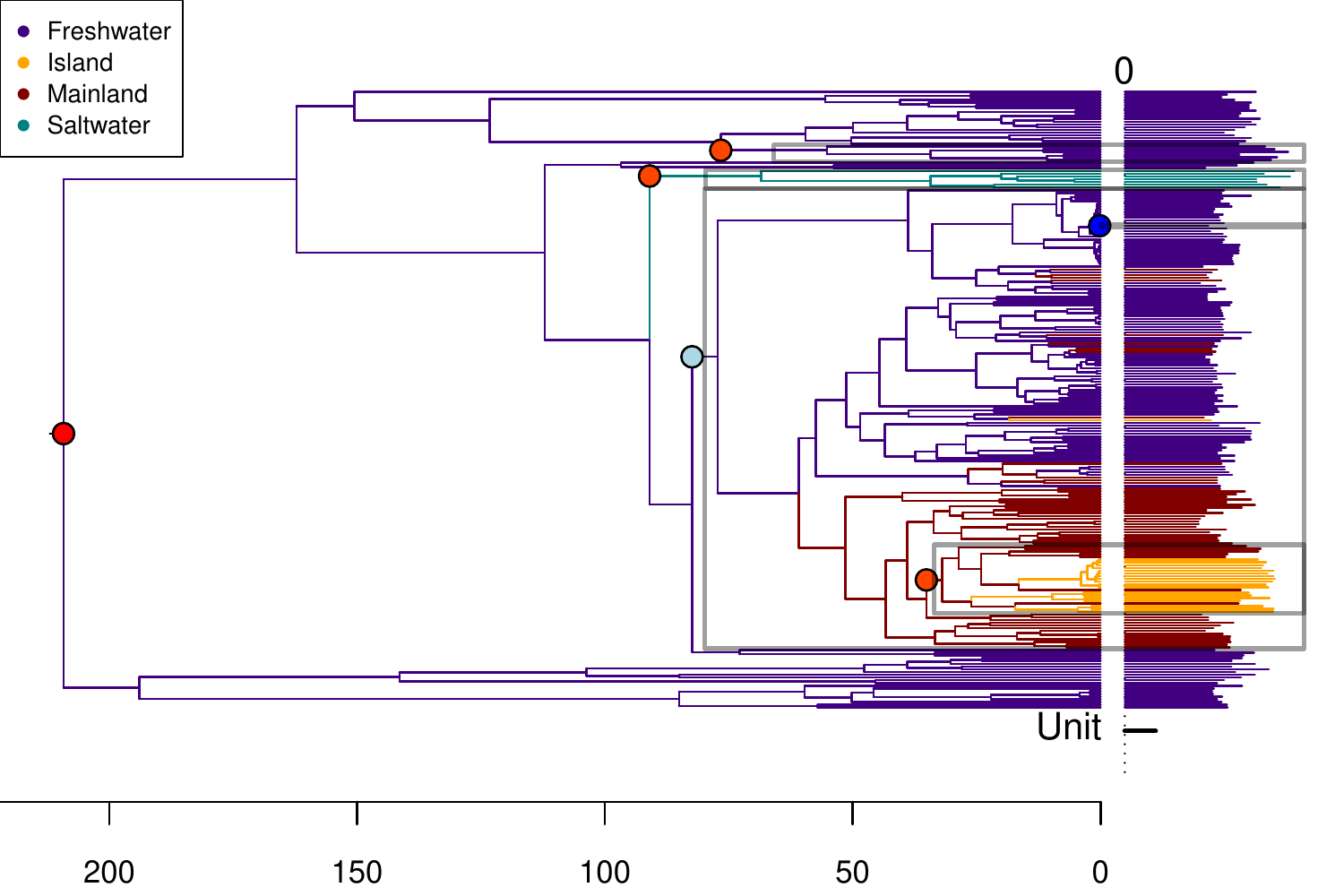} 

\caption{Phylogenetic tree of the Chelonians. {L}og-transformed traits values are represented on the right. {Branch colors} represent the habitats. The shifts found by our EM algorithm are shown {as circles}, with a color indicating the value of the shift, from blue (negative) to red (positive). Boxes highlight the groups {induced} by the shifts. The $x$-scale is in million years.}\label{fig:chelonia_sol}
\end{center}
\end{figure}

\subsection{Description of the Dataset}
Extant species of the order Testudines, or Chelonii, are turtles and tortoises, living all across the globe, and exhibiting a wide variation in body size, from the small desert speckled tortoise (\emph{Homopus signatus}, $10$ cm), to the large marine leatherback sea turtle (\emph{Dermochelys coriacea}, $244$ cm). In order to test the hypothesis of island and marine gigantism, that could explain the extreme variations observed, \citet{jaffe2011} compiled a dataset containing a measure of the carapace length for $226$ species, along with a phylogenetic tree of theses species, spanning  $210$ million years (my) (see Figure~\ref{fig:chelonia_sol}). They assigned each species to one of four habitats: mainland-terrestrial, freshwater, marine and island-terrestrial. Then, testing several fixed regimes allocations on the branches of the tree using the method described in \citet{butler2004}, they found the {best} support in favor of {a} \enquote{OU2} model that assigned one regime to each habitat. Following \citet{uyeda2014}, we will refer to this model as the \enquote{OU$_{habitat}$} model. Note that this model is ambiguously defined, as it requires to assign a habitat to each ancestral species. Using proposition~\ref{prop:size_eq_class}, we found that there were $48$ equivalent parsimonious ways of doing so that respect the habitats observed at the tips of the tree. One of these {habitat reconstruction} is presented Figure~\ref{fig:chelonia_sol}.
  
\subsection{Method} We used the version of the dataset embedded in the package \citeR{geiger}, that contains a phylogenetic tree and a vector of log-carapaces lengths. \PB{The corresponding habitats are reported in} the Appendix of \citet{jaffe2011}.\par
  We ran our algorithm with a number of shifts going from $0$ to $20$. \NR{Rather than estimating $\alpha$ directly within the EM as we did for the simulations, we took $\alpha$ varying on a grid, taking $6$ values regularly spaced between $0.01$ and $0.1$, 
 % $\alpha \in \{alpha_grid[c(11, seq(21, 109, 20))]\}$,
  but fixed for each estimation. We found that this approach, although computationally more intensive, gave better results.} These $6\times 20 = 120$ estimations took around $2$
  hours of CPU time. For each number of shifts, going again from $0$ to $20$, we kept the solution with the maximal likelihood, and we applied the model selection criterion to them. This method gave a solution with $5$ shifts, and a selection strength of $0.06$ (i.e.\ $5.5$ \% of the total height of the tree). Using a finer grid for $\alpha$ gives highly similar {results}, allocating shifts {to} the same edges. These last estimations are given below.\par

  \subsection{Results}
  Our method selected a solution with $5$ shifts, a rather strong selection strength ($t_{1/2} = 5.4\%$ of the tree height), and a rather low root variance ($\gamma^2 = 0.22$, see table~\ref{table:chelonia}, first column).
  Two of those shifts are closely related to the habitats defined in \citet{jaffe2011} (see Figure~\ref{fig:chelonia_sol}). The ancestral optimal value, that applies here to two clades of freshwater turtles, is estimated to be around $38$ cm. A small decrease in size for a large number of mainland and freshwater turtles is found (optimal value $24$ cm). Marine {turtles} (super-family Chelonioidea) are found to have an increased carapace length (with an optimal value of $130$ cm), as well as a clade containing soft-shell tortes (family Trionychidae, optimal size $110$ cm), and a clade containing almost all island tortoises, including several sub-species of Gal\'apagos tortoises (\emph{Geochelone nigra}). Only the Ryukyu black-breasted leaf turtle (\emph{Geoemyda japonica}), endemic to the Ryukyu Islands in Japan, and distant on the phylogenetic tree, is not included in this group. Note that the group {also} contains some mainland tortoises of the genus Geochelone, that are closely related to Gal\'apagos tortoises. This is typical of our method: it constructs groups that are both phenotypically and phylogenetically coherent. Finally, one species is found to have its own group, the black-knobbed map turtle (\emph{Graptemys nigrinoda}), with a very low optimal value of $\ensuremath{1.4\times 10^{-20}}$ cm, for a measured trait of $15$ cm. The fact that the shift has a very high negative value ($\ensuremath{-49}$ in log scale) is just an artifact due to the actualization factor on a very small branch ($0.18$ my, for an inferred phylogenetic half-life of $11$ my). This is a rather unexpected choice of shift location. When considering the linear model as transformed by the cholesky matrix of the variance to get independent errors (as in the proof of proposition~\ref{prop:model_selection}), we find a leverage of $0.94$, indicating that this species trait behaves in the transformed space as an outsider.\par
  
  \subsection{Comparison with other methods}\label{sec:chelonia_comparison}
  %A more detailed comparison with methods previously used on this dataset is given in Section~\ref{supp:chelonia} of the~\ref{supp}. The results are given for methods \printR{OUwie} \citep{beaulieu2012}, \printR{bayou} \citep{uyeda2014} and \citeR{SURFACE}.

In order to compare our results to previously published ones, we reproduced some of the analysis already conducted on this dataset. \NR{We hence ran the methods described in \citet{jaffe2011} (using the \printR{R} package \printR{OUwie}, with fixed positions for the shifts), \citet{uyeda2014} (implemented in package \printR{bayou}), \citet{SURFACE} (package \printR{SURFACE}), and \citet{hoane2014} (function \printR{OUshifts} in package \printR{phylolm}). See Section~\ref{supp:chelonia} in Appendix~\ref{supp:SimulationAnalysis} for more details on these methods and the parameters we used.}

  The shifts allocated on the tree by methods \printR{bayou}, \printR{SURFACE} and \printR{OUshifts} are presented on Supplementary Figure~\ref{fig:chelonia_sol_bayou} (Appendix~\ref{supp:SimulationAnalysis}). We can see that $3$ among the most strongly supported shifts in the posterior distribution given by \printR{bayou}, as well as \PB{some among the oldest shifts found by \printR{SURFACE} and \printR{OUshifts}}
  %the $4$ {oldest} shifts for \printR{SURFACE}
  are similar to the ones found by our method. The \printR{bayou} method finds equal support for many shifts, all over the tree, and the median of the posterior distribution is $17$ shifts, which is pretty close to the mode of the prior put on the number of shifts ($15$). The \printR{SURFACE} \PB{and \printR{OUshifts} methods select respectively $33$ and $8$ shifts}, including many on pendant edges, that are not easily interpretable. The backward step of \printR{SURFACE} allowed to merge the regimes found for marine turtles and soft-shell tortoises that our method found to have very similar optimal values.
  The results of the five methods are summarized Table~\ref{table:chelonia}. Note that these models are not nested, due to the status assigned to the root, and to the possible convergences. 
  %In view of these analyses, our model selection method seems to select an acceptable trade-off between interpretability and likelihood maximizing.
  
  Compared to step-wise heuristics, our integrated maximum likelihood based approach allows us to have a more \enquote{global} view of the tree, and hence to select a solution that accounts better for the global structure of the trait distribution. Thanks to its rigorous model selection procedure, our model seems to report significant shifts only, that are more easily interpretable than the solutions found by other methods, and that do not rely on any chosen prior.

\begin{table}[!ht]
\begin{center}
% latex table generated in R 3.2.3 by xtable 1.8-2 package
% Wed Jun 15 14:25:29 2016
\begin{tabular}{rlllll}
  \hline
 & Habitat & EM & bayou & SURFACE & OUshifts \\ 
  \hline
Number of shifts & 16 & 5 & 17 & 33 & 8 \\ 
  Number of regimes & 4 & 6 & 18 & 13 & 9 \\ 
  lnL & -133.86 & -97.59 & -91.54 & 30.38 & -79.79 \\ 
  Marginal lnL & NA & NA & -149.09 & NA & NA \\ 
  $\alpha$ ($\times h$, per my) & 9.32 & 12.76 & 36.54 & 1.72 & 3.25 \\ 
  $\ln 2 / \alpha$ (my) & 15.56 & 11.36 & 3.97 & 84.28 & 44.64 \\ 
  $\sigma^2$ ($\times h$, per my) & 6.21 & 5.57 & 11.91 & 0.72 & 2.29 \\ 
  $\gamma^2$ & 0.33 & 0.22 & 0.16 & 0.21 & 0.35 \\ 
  CPU time (min) & 65.25 & 134.49 & 136.81 & 634.16 & 8.28 \\ 
   \hline
\end{tabular}

\caption{Summary of the results obtained with several methods for the Chelonian Dataset. For bayou, the median of the posterior distributions is given.}\label{table:chelonia}
\end{center}
\end{table}
\section*{Acknowledgments}
We would like to thank C\'ecile An\'e for helpful discussions on an early draft of this manuscript.
We are grateful to the INRA MIGALE bioinformatics platform (\url{http://migale.jouy.inra.fr}) for providing the computational resources needed for the experiments. \PB{We also thank the two anonymous reviewers whose careful and critical reading greatly helped improve this manuscript.}

 %%%%%%%%%%%%%%%%%%%%%%%%%%%%%%%%%%%%%%%%%%%%%%%%%%%%%%%%%%%%%%%%%%%%%%%%%%%%%%%
 %% Appendix
 %%%%%%%%%%%%%%%%%%%%%%%%%%%%%%%%%%%%%%%%%%%%%%%%%%%%%%%%%%%%%%%%%%%%%%%%%%%%%%%
 %\include{appendix}

 %%%%%%%%%%%%%%%%%%%%%%%%%%%%%%%%%%%%%%%%%%%%%%%%%%%%%%%%%%%%%%%%%%%%%%%%%%%%%%%
 %% Supplementary Material
 %%%%%%%%%%%%%%%%%%%%%%%%%%%%%%%%%%%%%%%%%%%%%%%%%%%%%%%%%%%%%%%%%%%%%%%%%%%%%%%

% %AOS,AOAS: If there are supplements please fill:
% \begin{supplement}[id=supp]
%  \sname{Supplementary Material}
%  \stitle{Supplementary Material for: Detection of adaptive shifts on phylogenies using shifted stochastic processes on a tree}
%  \slink[doi]{???}
%  \sdatatype{.pdf}
%  \sdescription{The supplementary materials are structured as follow: Section~\ref{supp:proofs} contains the proofs of the propositions enounced in the main text; Section~\ref{supp:vandermonde} states and prove two Vandermonde-like equalities; Section~\ref{supp:EM} contains the technical descriptions of the EM; Section~\ref{supp:modelselection} provides a demonstration of the oracle inequality for model selection; Section~\ref{supp:SimulationAnalysis} contains some supplementary analysis of the simulations campaign; Section~\ref{supp:chelonia} contains a comparison of the three inference methods used for the Chelonia dataset; and Section~\ref{supp:implementation} describes some implementations details and cite used \printR{R} packages used.}
% \end{supplement}

 %%%%%%%%%%%%%%%%%%%%%%%%%%%%%%%%%%%%%%%%%%%%%%%%%%%%%%%%%%%%%%%%%%%%%%%%%%%%%%%
 %% Biblio
 %%%%%%%%%%%%%%%%%%%%%%%%%%%%%%%%%%%%%%%%%%%%%%%%%%%%%%%%%%%%%%%%%%%%%%%%%%%%%%%
{\small
\bibliographystyle{plainnat}
\setlength{\bibsep}{0.1pt plus 0.3ex}
\bibliography{../../../../Biblio/BibTex/Bibliographie}
}

 %%%%%%%%%%%%%%%%%%%%%%%%%%%%%%%%%%%%%%%%%%%%%%%%%%%%%%%%%%%%%%%%%%%%%%%%%%%%%%%
 %% Appendix
 %%%%%%%%%%%%%%%%%%%%%%%%%%%%%%%%%%%%%%%%%%%%%%%%%%%%%%%%%%%%%%%%%%%%%%%%%%%%%%%
% \include{appendices/appendix}

\end{document}